\pgfplotsset{compat=1.8}
\tikzset{
	cd/.style={
		nodes={inner sep=4pt},
		row sep=1.8em, 
		column sep=2.4em
	}
}
\newcommand{\ahat}{\hat{a}}
\newcommand{\adhat}{\hat{a}^\dagger}
\newcommand{\mbf}[1]{\mathbf{#1}}
\newcommand{\ketbra}[2]{\ket{#1}\!\!\bra{#2}}
\newcommand{\dbraket}[2]{\braket{#1 \! \mid \! #2}}
\newcommand{\modsq}[1]{\left|{#1}\right|^2}
\newcommand{\stnorm}[1]{\left\Vert#1\right\Vert_{\ket{}}}
\DeclareMathOperator{\rank}{rank}
\DeclareMathOperator{\dom}{dom}
\newtheorem{theorem}{Theorem}
\newtheorem{lemma}{Lemma}
\newtheorem{corollary}{Corollary}
\theoremstyle{definition}
\newtheorem{example}{Example}
\tikzset{gg/.style={opacity=0, fill opacity=1,rounded corners, fill=blue!10, inner xsep=1pt, inner ysep=6pt, thin}, 
	pg/.style={opacity=0, fill opacity=1,rounded corners, fill=green!10, inner xsep=1pt, inner ysep=0pt, thin},
	dg/.style={opacity=0, fill opacity=1,rounded corners, fill=red!10, inner xsep=1pt, inner ysep=0pt, thin}}
\newcommand{\metershift}{0.8mm}
\newcommand{\refappendix}[1]{\hyperref[#1]{Appendix~\ref*{#1}}}
\begin{document}
	
	\preprint{APS/123-QED}
	
	\title{Photon catalysis for general multimode multi-photon quantum state preparation}
	
	\author{Andrei Aralov}
	\thanks{These two authors contributed equally.}
	\affiliation{Laboratoire Kastler Brossel, Sorbonne Universit\'{e}, CNRS, ENS-Universit\'{e} PSL,  Coll\`{e}ge de France, 4 place Jussieu, F-75252 Paris, France}
	\author{Émilie Gillet}
	\thanks{These two authors contributed equally.}
	\affiliation{Laboratoire Kastler Brossel, Sorbonne Universit\'{e}, CNRS, ENS-Universit\'{e} PSL,  Coll\`{e}ge de France, 4 place Jussieu, F-75252 Paris, France}
	\author{Viet Nguyen}
	\affiliation{Sorbonne Université, F-75006 Paris, France}
	\author{Andrea Cosentino}
	\affiliation{Sorbonne Université, F-75006 Paris, France}
	\author{Mattia Walschaers}
	\email{mattia.walschaers@lkb.upmc.fr}
	\affiliation{Laboratoire Kastler Brossel, Sorbonne Universit\'{e}, CNRS, ENS-Universit\'{e} PSL,  Coll\`{e}ge de France, 4 place Jussieu, F-75252 Paris, France}
	\author{Massimo Frigerio}
	\email{massimo.frigerio@lkb.upmc.fr}
	\affiliation{Laboratoire Kastler Brossel, Sorbonne Universit\'{e}, CNRS, ENS-Universit\'{e} PSL,  Coll\`{e}ge de France, 4 place Jussieu, F-75252 Paris, France}

	%
	%
	
	\date{\today}
	
	\begin{abstract}
		Multimode multiphoton states are at the center of many photonic quantum technologies, from photonic quantum computing to quantum sensing. In this work, we derive a procedure to generate exactly, and with a predictable number of steps, any such state by using only multiport interferometers, photon number resolving detectors, photon additions and displacements. We achieve this goal by establishing a connection between photonic quantum state engineering and the algebraic problem of symmetric tensor decomposition. This connection allows us to solve the problem by using corresponding results from algebraic geometry and unveils a mechanism of photon catalysis, where photons are injected and subsequently retrieved in measurements, to generate entanglement that cannot be obtained through Gaussian operations. We also introduce a tensor decomposition, that generalizes our method and allows to construct circuits yielding perfect fidelity, using the minimum number of catalysis photons. As a benchmark, we numerically evaluate our method and compare its performance with state-of-the art results, confirming 100\% fidelity on different classes of states.
	\end{abstract}
	
	\maketitle
	

	\section{Introduction}
	Multiphoton states of a multimode system are versatile and widely studied finite linear subspaces of the Fock space of multimode bosonic systems \cite{Pan2012} and they find applications in quantum computation \cite{KLM2001,LOQCrev,Flamini_2019,PhysRevLett.130.090602,Fusion2023,Quandela24,PhysRevResearch.7.033051,alexander_manufacturable_2025,Salavrakos_2025}, quantum metrology \cite{Dowling2008, PhysRevLett.102.040403,Xiang2013,PhysRevLett.111.070403,PhysRevA.95.032321,NOONLim,Namkung_2024} quantum simulation \cite{PhysRevLett.120.130501,10.1063/5.0181151,Jelmer} as well as being at the core of various types of boson sampling tasks \cite{10.1145/1993636.1993682, PhysRevA.96.032326,Spagnolo2023}. Moreover, these same states have been introduced under the name of \emph{core states} \cite{Chabaud2022holomorphic} in the stellar representation of non-Gaussian states and their characterization is therefore essential to the understanding of non-Gaussian states with finite stellar rank \cite{StellarRepr}. 
	
	When the total number of photons is constrained, the corresponding subspaces are invariant under all linear optical operations (equivalently, all passive Gaussian operations). However, the vast majority of them cannot be prepared starting from factorized multimode Fock states and passive linear optics, nor, more generally, by alternating probabilistic photon additions and interferometers, as can be easily deduced by counting free parameters. This aspect of multiphoton quantum states has been implicitly known for a long time; standard references on bosonic systems often emphasize the need to consider superpositions of such states to construct the full Fock space \cite{Bratteli1997-bc,Verbeure2010-gt,PRXQuantum.2.030204}. The issue is also explicitly discussed in \cite{PhysRevResearch.3.033018}. A first explicit scheme to construct arbitrary multimode multiphoton states dates back several decades \cite{PhysRevA.68.042325}. This scheme relies on intense coherent states which are truncated using quantum scissors \cite{PhysRevLett.81.1604}, and intricate Bell-type measurements \cite{DUSEK2001161} that must be implemented through postselection. While this approach works in principle, it is highly resource intensive and does not come with a mathematical framework that allows to reason about its optimality. A more structured classification of multimode states with a definite number of photons was recently proposed in \cite{Kopylov:2025zcs}, where the authors show that only those belonging to a specific class can be generated by alternating $N$ photon additions with passive linear optics transformations acting on the $M$ target modes alone, but how to generate states belonging to the other classes remained an open problem, despite the fact that they constitute the vast majority of multimode multiphoton states.
	
	In this paper, we tackle the very general problem of exactly preparing any multimode multiphoton state by considering an ancillary mode on which we perform post-selection through photon counting at the end of the process. In this way, we show that the $M$ modes can be projected into \emph{any} desired state with $N$ photons, with a quantifiable cost and a fidelity of one. By adding a second ancillary mode, moreover, we can also prepare any state with non-homogeneous photon numbers, i.e., any multimode multiphoton state.
	
	Our protocol can effectively be implemented with a specific realization of a boson sampling \cite{10.1145/1993636.1993682} or Gaussian boson sampling \cite{PhysRevLett.119.170501} device. In order to produce a state with up to $N$ photons, the boson sampler that implements our state-engineering protocol must generally be seeded with much more than $N$ photons. Even though the additional photons are retrieved by the photon counter, they serve the key purpose of creating a type of mode-intrinsic entanglement \cite{PhysRevA.100.062129,Lopetegui:25}. This process has been known in literature under the name of photon catalysis \cite{PhysRevLett.88.250401}. Hitherto, it was generally applied to produce a range of single-mode states \cite{PhysRevA.86.043820, Hu_2016, Birrittella:18, Eaton_2019}, but its role in the increase of entanglement has also been investigated \cite{Bartley_2015}. Recently, photon catalysis was studied in a multimode context, but only for two-photon states \cite{10821417}. Our present work can be seen as a proof by construction that photon catalysis can create any multimode multi-photon state. The two-photon result of \cite{10821417} is retrieved as a special case of the proposed protocol.
	
	In the alternative implementation with a Gaussian boson sampler, we inject a series of weakly squeezed states into an interferometer and subsequently count photons. This procedure is commonly used to create single-mode non-Gaussian states for continuous variable quantum computing \cite{PhysRevA.100.052301,Bourassa2021blueprintscalable,PhysRevLett.128.240503,larsen_integrated_2025}. While possible application for engineering multimode states are discussed in \cite{PhysRevA.100.052301}, no general protocol was provided. In the context of continuous-variable measurement-based quantum computing \cite{PhysRevLett.97.110501}, it is implicitly known that we can obtain any multimode state by performing measurements on a well-chosen Gaussian state \cite{arzani2025effectivedescriptionsbosonicsystems}. Our work can be seen as the first explicit demonstration of such a general quantum state-engineering protocol. 
	
	The layout of our paper is as follows: after an introduction to the theory of multimode multiphoton states, also known as core states, and their representation as stellar polynomials in \autoref{sec:stellar-repr}, we introduce our main result in \autoref{sec:main-result} and prove it using Waring decomposition and monic expansion. In \autoref{sec:photonic-impl}, we explain how our protocol could be effectively implemented with common photonic setups: a weak version of a boson sampler and a Gaussian boson sampler. In \autoref{sec:bounds}, we provide results bounding the number of photon additions and the rank of the Fock state on which the ancillary mode has to be projected and, in \autoref{sec:special-cases}, we analyze some special cases: an example of GHZ state preparation inspired by the proposed general methods but trading fewer additions for extra displacements; and an analytical solution of the elementary symmetric polynomial (ESP) decomposition for $d=2$. Finally, in \autoref{sec:numerical-methods}, a software implementation of our method and its variants allows us to compare fidelity, probability of success and resource use for a selection of state preparation tasks.
	
	\section{Stellar Representation And Core States}
	\label{sec:stellar-repr}
	
	In this section we introduce the stellar representation of quantum states that will allow us to map our set of quantum operations on core states as operations on complex multivariate polynomials \cite{StellarRepr, Chabaud2022holomorphic}. The stellar function of a given quantum state of $M$ modes $\ket{\psi} \in \mathcal{H}^{\otimes M}$, where $\mathcal{H} \simeq \mathrm{L}^2(\mathbb{R})$ is the Fock space of one bosonic mode, is defined as a renormalized overlap of a multimode coherent state and $\ket{\psi}$:
	\begin{equation}
		F^\star_\psi ( \boldsymbol{\alpha} ) = e^{\frac{1}{2}\modsq{\boldsymbol{\alpha}}} \dbraket{\boldsymbol{\alpha}^*}{\psi}
	\end{equation}
	where $\boldsymbol{\alpha} \in \mathbb{C}^M$. 
	It can be shown \cite{StellarRepr} that any state can be written as:
	\begin{equation}
		\ket{\psi} = F^\star_\psi\left( \mbf{\adhat} \right)\ket{\mbf{0}}.
	\end{equation}
	where $\mbf{\adhat} = (\adhat_{1}, \dots ,\adhat_{M})$ are the creation operators of the $M$ modes, satisfying the canonical commutation relations $[ \ahat_{j}, \adhat_{k} ] = \delta_{jk} $ with their adjoints. 
	
	In particular, for core states, which have a finite support in the Fock basis, the stellar function becomes a polynomial with complex coefficients $P_\psi \in \mathbb{C}\left[ x_1, \dots , x_M \right]$, where from now on we denote by $\mathbb{C}\left[ x_1, \dots , x_M \right]$ the ring of polynomials in $M$ variables (of arbitrary degree) on the complex field.
	\begin{align}
		\ket{\psi} = \sum_{\mbf{n}} \psi_{\mbf{n}} \ket{\mbf{n}} &= \sum_{\mbf{n}} \frac{\psi_{\mbf{n}}}{\sqrt{\mbf{n}!}} (\mbf{\adhat})^{\mbf{n}} \ket{\mbf{0}} = P_\psi\left(\mbf{\adhat}\right)\ket{\mbf{0}} \\
		P_\psi\left( \mbf{x} \right) &= \sum p_{\mbf{n}} \mbf{x}^{\mbf{n}}; \;\;\; p_\mbf{n} = \frac{\psi_\mbf{n}}{\sqrt{\mbf{n}!}}
	\end{align}
	
	Here and throughout the paper we utilize the usual multi-index notation:
	\begin{alignat}{2}
		& \mbf{n} = \left(n_1 ,\dots , n_M\right) \quad && \mbf{x} = \left[x_1 , \dots , x_M\right]^T \\
		& \mbf{n}! = n_1! \dots n_M! \quad && \left|\mbf{n}\right| = \sum_{i = 1}^M n_i \\
		& \ket{\mbf{n}} = \ket{n_1 \dots n_M} \quad && \mbf{x}^\mbf{n} = x_1^{n_1} \dots x_M^{n_M}
	\end{alignat}
	
	This bijective correspondence allows us to study multivariate polynomials $P_\psi$ and then transfer results to quantum states (\autoref{tbl:state-poly-corresp}). For simplicity of notation, we will omit normalization factors as they are easily applied by scaling all the coefficients of the polynomial.
	
	\begin{table}[ht]
		\renewcommand{\arraystretch}{1.5}
		\centering
		\begin{tabular}{c|c}
			State transformation & Polynomial transformation \\ \hline
			$\adhat_k\ket{\psi}$ & $x_k P_\psi\left(\mbf{x}\right)$ \\
			$\ahat_k\ket{\psi}$ & $\frac{\partial}{\partial x_k} P_\psi\left(\mbf{x}\right)$ \\
			$\bra{n}_k\ket{\psi}$ & $\frac{\partial^n}{\partial x_k^n} P_\psi\left(\mbf{x}\right)\Bigr|_{x_k = 0}$
		\end{tabular}
		\caption{Correspondence of state transformations and operations on polynomials}
		\label{tbl:state-poly-corresp}
	\end{table}
	
	Another noteworthy reason to study stellar polynomials is that one can classify core states by the number of irreducible factors of different degrees in their factorization. In particular, it is possible to prepare any core state whose stellar polynomial is homogeneous and factorizes into a product of $N$ linear terms directly acting on the $M$ target modes, starting in the vacuum state, by alternating photon additions and multiport interferometers without ancillary modes, as shown in \cite{Kopylov:2025zcs}. The class of such states is denoted $\left[1^N\right]_M$.
	
	The question remains, however, how and whether one can prepare all other classes of irreducible polynomials with these resources, by adding one ancillary mode to the $M$ target modes and conditioning on a specific photon counting outcome of that mode.

	\section{Preparation Of Arbitrary Multimode Core States}
	\label{sec:main-result}
	In this section we present our main contribution, a method to prepare any quantum state with finite support in the Fock basis using one or two ancillary modes, multiport interferometers, photon additions, photon-number-resolving (PNR) detectors and displacement operations. We start from the aforementioned observation that any $M$-modes multiphoton state described by a homogeneous polynomial in $\left[1^N\right]_M$ can be prepared exactly with photon additions and multiport interferometers acting solely on the $M$ modes. However, in the general case and unlike single-variable and two-variables polynomials, not all polynomials in $M \geq 3$ variables factorize into  a product of linear terms, which leads to a natural question: what is the amount of resources needed to prepare an arbitrary core state? 
	The method that we propose guarantees the preparation of an arbitrary core state of $d$ photons in $M$ modes with 100\% fidelity in a finite number of steps, and quantifies the required resources in terms of overhead of photon additions to be performed. The proposed method is described in the following \autoref{th:main-theorem} and depicted in \autoref{fig:main-circuit}.
	\begin{theorem}
		\label{th:main-theorem}
		Any core state containing at most $d$ photons in $M$ modes, described by a polynomial $P_\psi \in \mathbb{C}\left[ x_1 , \dots , x_M \right]$ of degree $d$ in $M$ variables, whose homogeneization has Waring rank $r$ can be prepared exactly from the multimode vacuum state, using two ancillary modes in the vacuum, $d\times r$ multiport interferometers and photon additions, one photon projection conditioning on having $d\left(r - 1\right)$ photons, and additionally, in case of an inhomogeneous state, one displacement and one projection conditioning on having zero photons.
	\end{theorem}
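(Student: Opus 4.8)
The plan is to reduce the state-engineering task to the purely algebraic Waring decomposition, and to bridge the two with a single product-of-linear-forms polynomial that is already synthesizable by the known $[1^N]_M$ protocol and that, under one photon-number projection, collapses onto the Waring sum. First I would homogenize: introducing a fresh variable $x_b$, set $\tilde P_\psi(\mbf{x},x_b)=x_b^{\,d}\,P_\psi(\mbf{x}/x_b)$, homogeneous of degree $d$ in $M+1$ variables with $\tilde P_\psi(\mbf{x},1)=P_\psi(\mbf{x})$. By hypothesis $\tilde P_\psi$ has Waring rank $r$, so over $\mathbb{C}$ there are linear forms $\ell_1,\dots,\ell_r$ in $(\mbf{x},x_b)$ with $\tilde P_\psi=\sum_{i=1}^{r}\ell_i(\mbf{x},x_b)^{\,d}$ (the scalar coefficients being absorbed into the $\ell_i$). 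The variable $x_b$ will be carried by the first ancillary mode.

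Next, let $x_a$ be the variable of the second ancillary mode and let $\zeta$ be a primitive $d$-th root of unity, so that $\prod_{j=0}^{d-1}(x_a-\zeta^{\,j}c)=x_a^{\,d}-c^{\,d}$ for any $c$. Define
\begin{equation}
  \Phi(\mbf{x},x_a,x_b)\;=\;\prod_{i=1}^{r}\prod_{j=0}^{d-1}\bigl(x_a-\zeta^{\,j}\,\ell_i(\mbf{x},x_b)\bigr)\;=\;\prod_{i=1}^{r}\bigl(x_a^{\,d}-\ell_i(\mbf{x},x_b)^{\,d}\bigr),
\end{equation}
a homogeneous polynomial of degree $dr$ in the $M+2$ variables which, by the left-hand product, factors into $dr$ linear forms. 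Hence $\Phi(\adhat_1,\dots,\adhat_M,\adhat_a,\adhat_b)\ket{\mbf{0}}$ belongs to the class $[1^{dr}]_{M+2}$ and, by \cite{Kopylov:2025zcs}, can be prepared from the $(M+2)$-mode vacuum by a sequence of $dr$ photon additions interleaved with multiport interferometers.

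I would then extract the Waring sum with a single PNR measurement. Expanding $\Phi$ in powers of the catalysis variable (it is monic in $x_a$, and only exponents that are multiples of $d$ occur) gives $\Phi=\sum_{k=0}^{r}(-1)^{k}\,x_a^{\,d(r-k)}\,e_k(\ell_1^{d},\dots,\ell_r^{d})$, with $e_k$ the $k$-th elementary symmetric polynomial. By the dictionary of \autoref{tbl:state-poly-corresp}, conditioning mode $a$ on the Fock state $\ket{d(r-1)}$ implements $d(r-1)$ derivatives in $x_a$ followed by setting $x_a=0$; among the exponents $d(r-k)$ this annihilates all terms with $k\ge 2$ (too many derivatives) and the $k=0$ term (evaluation at $x_a=0$, as $d\ge1$), so only the $k=1$ term remains, leaving — up to the nonzero constant $[d(r-1)]!$ — the polynomial $-\tilde P_\psi(\mbf{x},x_b)$. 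The remaining $M+1$ modes are thus in the core state $\propto\tilde P_\psi(\adhat_1,\dots,\adhat_M,\adhat_b)\ket{\mbf{0}}$. Finally I project mode $b$ onto the coherent state of amplitude $1$; since a coherent state is a displaced vacuum, this is one displacement followed by conditioning on zero photons, and it substitutes $x_b\mapsto1$ in the stellar polynomial up to the nonzero weight $e^{-1/2}$. The $M$ target modes therefore end in $\propto\tilde P_\psi(\adhat_1,\dots,\adhat_M,1)\ket{\mbf{0}}=P_\psi(\mbf{\adhat})\ket{\mbf{0}}\propto\ket{\psi}$. Since the target carries a state exactly proportional to $\ket{\psi}$ and each conditioning outcome has strictly positive probability, the fidelity is one; the explicit success probability follows from the polynomial's coefficients and is analyzed in \autoref{sec:bounds}.

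The only substantive step is the construction of $\Phi$: it must simultaneously be a product of linear forms — so that the existing $[1^{dr}]$ synthesis applies — and have a single-Fock-number projection equal to the Waring sum $\sum_{i}\ell_i^{d}$. The $d$-th root-of-unity identity $x_a^{\,d}-c^{\,d}=\prod_{j}(x_a-\zeta^{\,j}c)$ is precisely what reconciles these requirements, and the expansion of this monic polynomial is the ``monic expansion'' that exposes the Waring sum as its sub-leading coefficient. Everything else is bookkeeping — that the factored polynomial has degree $dr$, that exactly $d(r-1)$ photons must be detected in the catalysis mode while $d$ remain for the target, and that the homogenizing mode is removed by a coherent-state projection — which is what pins down the resource count quoted in the statement. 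Degenerate cases, such as $r=1$, are handled uniformly by the same formulas.
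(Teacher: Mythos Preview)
Your proof is correct and follows essentially the same route as the paper's: homogenize, take the Waring decomposition, embed it as the sub-leading-in-$x_a$ coefficient of a product of $dr$ linear forms via the root-of-unity identity (the paper's Lemmas~\ref{th:monic-and-waring-0} and~\ref{th:monic-and-waring}), prepare that $[1^{dr}]_{M+2}$ state, and extract the target by one PNR projection followed by dehomogenization through a displacement and a vacuum projection (the paper's Lemma~\ref{th:dehomogeniztion}). The only cosmetic differences are the sign choice $x_a^{\,d}-\ell_i^{\,d}$ in place of the paper's $\lambda^{d}+(\mbf{w}_k^{T}\mbf{x})^{d}$ and the labeling of the two ancillary modes.
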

	The constructive proof of this theorem is detailed in the following subsections and it is based on three main steps. First, we show how any homogeneous polynomial in $M$ variables can be seen as the coefficient of some power of an ancillary variable $\lambda$ inside a polynomial in $M+1$ variables. This coefficient, hence the target core state, can be extracted from the corresponding superposition state of $M+1$ modes by post-selecting on measuring the right number of photons in the ancillary mode associated with $\lambda$.  Secondly, we show that this enlarged polynomial factors as a product of linear terms, therefore the corresponding core state can be prepared with photon additions and interferometers acting solely on the $M+1$ modes. Finally, we show that adding a second ancillary mode, displacing it by a fixed amount and projecting on the vacuum leads to the preparation of core states with variable numbers of photons in the superposition.
	
	While \autoref{th:main-theorem} provides a general procedure to prepare any core state and relates it with a well studied tensor decomposition, it could be demonstrated that it is not optimal in terms of the number of catalysis photons and the order of the PNR projection required (see \autoref{sec:numerical-methods}). Below we introduce a new tensor decomposition addressing the optimality of the preparation scheme, which is surprisingly equivalent to the universality of the classical algebraic circuit model introduced in \cite{SHPILKA2002639}.
	\begin{theorem}[Elementary Symmetric Decomposition]
		\label{th:elementary-symmetric-decomposition}
		There exists $N$ such that for any homogeneous polynomial $P \in \mathbb{C}\left[x_1, \dots, x_M\right]$ of degree $d$ there is a matrix $U: \mathbb{C}^M \rightarrow \mathbb{C}^N$ such that  the polynomial could be written as an ESP of the set of linear forms defined by this matrix:
		\begin{equation}
			P\left(\mbf{x}\right) = e_d\left(U\mbf{x}\right)
		\end{equation}
		
		In particular, for a core state $\ket{\psi}$, when the decomposition of the homogenized stellar polynomial is found $P_\psi^h\left(\mbf{x}\right) = e_d\left(U\mbf{x}\right)$, the state could be prepared using $N$ multiport interferometers and photon additions, and one photon projection conditioning on having $N - d$ photons, and additionally, in case of an inhomogeneous state, one displacement and one projection conditioning on having zero photons.
		
		Furthermore, under certain conditions this decomposition leads to the shortest state preparation protocol involving the given resources (photon additions, multiport interferometers and PNR detection on a single ancillary mode). These conditions are the following:
		\begin{enumerate}
			\item The stellar polynomial is homogeneous.
			\item The stellar polynomial is irreducible.
		\end{enumerate}
	\end{theorem}
	
	As described in \autoref{sec:numerical-methods}, the decomposition introduced in \autoref{th:elementary-symmetric-decomposition} is well-suited for numerical optimization (such as gradient descent) which makes it the preferable tool for practical implementations. We also note that symbolic methods based on Gröbner bases can, in principle, be used to determine the minimal $N$ for which a solution to this decomposition exists. However, the computational cost of constructing these bases grows rapidly with the number of variables and the degree of the polynomials, which in this context corresponds to the number of modes, and of photons.
	
	\begin{figure*}[t]
		\centering
		\begin{quantikz}[column sep=0.5cm]
			\lstick[5]{$\ket{0}^{\otimes (M+2)}$}
			& \gate{\adhat}\gategroup[5, steps=5, background, style=gg, label style={yshift=1mm}]{$\left[1^{N}\right]_{M+2}$} & \gate[5]{\hat{U}_1} & {\ \dots \ } & \gate{\adhat} & \gate[5]{\hat{U}_{N}}
			& [-0.15cm] \slice[style={xshift=1cm}, label style={pos=1, anchor=north}, style={very thick}]{
				$\ket{L_{\psi^h}}$
			}
			&& [-0.15cm] \meter[label style={yshift=\metershift}]{\ketbra{n}{n}}\gategroup[1, steps=1, background, style=pg]{} &
			\setwiretype{n} \slice[label style={pos=0, anchor=south west, xshift=-4mm}, style={very thick}]{
				$\ket{{\psi^h}}$ 
			}
			& \\
			\
			& 				&				& {\ \dots \ } &&&&&\gategroup[1, steps=4, background, style=dg]{}&& \gate{\hat{D} \left(1\right)} & \meter[label style={yshift=\metershift}]{\text{$\ketbra{0}{0}$}} \\
			& 				&				& {\ \dots \ } &&&&&&&&\rstick[3]{$\ket{\psi}$} \\
			\
			& \setwiretype{n} \vdots &		& {\ \dots \ } &&&&& \vdots &&&\\
			\
			& 				&				& {\ \dots \ } &&&&&&&&
		\end{quantikz}
		\caption{Circuit of the proposed method. By injecting $N$ photons, $n$ of which will be later projected-out, a state of stellar rank $d = N - n$ is obtained. The general method described in \autoref{th:main-theorem} sets $N = dr, n = d\left(r - 1\right)$; while the special case described in \autoref{th:e2-corollary} sets $N = M, n = M - 2$. The blue part represents the preparation of the seed state, the stellar polynomial of which factorizes into a product of linear terms, the green part represents the post-selection that allows to construct an arbitrary homogeneous state and the red part represents dehomogenization.}
		\label{fig:main-circuit}

	\end{figure*}
	
	\subsection{Waring decomposition}
	
	As a first step, we recall the correspondence between homogeneous polynomials and symmetric tensors \cite{Landsberg2012_PolyTens}: for any homogeneous polynomial $P_\psi$ of degree $d$, there is a corresponding symmetric tensor $\mathcal{P}_\psi \in S^d\left(\mathbb{C}^M\right)$ encoding its coefficients in a monomial basis. In particular, this correspondence allows to relate the decomposition of a polynomial into a sum of powers of linear forms to symmetric tensor decomposition. Every symmetric tensor could be decomposed as a sum of tensorial powers of some vectors $\mbf{w}_k \in \mathbb{C}^M$, where the number of terms in the decomposition is called the symmetric rank of $\mathcal{P_\psi}$.
	\begin{equation}
		\mathcal{P_\psi} = \sum_{k=1}^r \mbf{w}^{\otimes d}_k
	\end{equation}

	The corresponding homogeneous polynomial decomposition is called the \emph{Waring decomposition}, and the number of terms $r$ is the \emph{Waring rank} of that polynomial.
	\begin{equation}
		P_\psi\left(\mbf{x}\right) = \sum_{k=1}^r \left( \mbf{w}_k^T \mbf{x} \right)^d
	\end{equation}
	
	\subsection{Monic expansion and dehomogenization}
	We now notice that the expression in the Waring decomposition of $P_\psi$ can be trivially rewritten as $e_{1}\left( \left(\mbf{w}_1^T\mbf{x}\right)^d, \dots, \left(\mbf{w}_r^T\mbf{x}\right)^d \right)$, where: 
	\begin{equation}
		e_1\left(\mbf{x}\right) = \sum_{j=1}^M x_j
	\end{equation}
	is the first elementary symmetric polynomial. Pursuing this idea, we can prove the following result:
	\begin{lemma}
		\label{th:monic-and-waring-0}
		Let $P_{\psi}(\mbf{x}) \in \mathbb{C}\left[x_1, \dots, x_M\right]$ be a homogeneous polynomial of degree $d$, having Waring decomposition $P_\psi\left(\mbf{x}\right) = \sum_{k=1}^r \left( \mbf{w}_k^T \mbf{x} \right)^d$. Define
		\begin{align}
			W &= \begin{bmatrix} \mbf{w}_1  ,  \dots , \mbf{w}_r \end{bmatrix}^T \\
			L_{r, d, W}\left(\lambda, \mbf{x}\right) &= \prod_{k=1}^r \left( \lambda^d + \left( \mbf{w}_k^T\mbf{x} \right)^d \right)
		\end{align}
		Then the coefficient of $\lambda^{d(r-1)}$ in the expansion of $L_{r, d, W}\left(\lambda,\mbf{x}\right)$ is exactly $P_{\psi}(\mbf{x})$. 
	\end{lemma}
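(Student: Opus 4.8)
The plan is to expand the product directly and recognize it as the generating function of the elementary symmetric polynomials. Introduce the shorthand $y_k = \left(\mbf{w}_k^T\mbf{x}\right)^d$ for $k = 1,\dots,r$, so that the Waring decomposition reads $P_\psi(\mbf{x}) = \sum_{k=1}^r y_k$. Then, distributing the product over the $r$ two-term factors and indexing the resulting monomials by the subset $S \subseteq \{1,\dots,r\}$ of factors from which the $y_k$ (rather than the $\lambda^d$) is chosen, we get
\begin{equation}
	L_{r,d,W}\left(\lambda,\mbf{x}\right) = \prod_{k=1}^r \left(\lambda^d + y_k\right) = \sum_{S \subseteq \{1,\dots,r\}} \lambda^{d(r-|S|)} \prod_{k\in S} y_k = \sum_{j=0}^r \lambda^{d(r-j)}\, e_j\!\left(y_1,\dots,y_r\right),
\end{equation}
where the last equality groups subsets by their cardinality $j = |S|$ and uses the definition $e_j(y) = \sum_{|S|=j}\prod_{k\in S}y_k$ of the $j$-th elementary symmetric polynomial in the $y_k$.

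Next I would observe that, since $P_\psi$ has degree $d \geq 1$, the exponents $d(r-j)$ for $j = 0,1,\dots,r$ are pairwise distinct, so there is no mixing between different values of $j$ when $L_{r,d,W}$ is viewed as a polynomial in $\lambda$ with coefficients in $\mathbb{C}\left[x_1,\dots,x_M\right]$. Consequently the coefficient of $\lambda^{d(r-1)}$ is exactly the term with $r - j = r - 1$, i.e. $j = 1$, which is $e_1\left(y_1,\dots,y_r\right) = \sum_{k=1}^r y_k = \sum_{k=1}^r\left(\mbf{w}_k^T\mbf{x}\right)^d = P_\psi(\mbf{x})$, as claimed.

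I expect no genuine obstacle here: the argument is essentially a one-line expansion, and the only point deserving care is the non-collision of the powers $\lambda^{d(r-j)}$, which is precisely where $d \geq 1$ (guaranteed by the degree hypothesis on $P_\psi$) enters. It is worth remarking in passing, although this belongs to the subsequent step rather than to the proof of the lemma, that each factor $\lambda^d + \left(\mbf{w}_k^T\mbf{x}\right)^d$ itself splits over $\mathbb{C}$ as a product of $d$ linear forms in $(\lambda,\mbf{x})$, namely $\prod_{m=0}^{d-1}\left(\lambda - \zeta_m\,\mbf{w}_k^T\mbf{x}\right)$ with $\zeta_m$ ranging over the $d$-th roots of $-1$; hence $L_{r,d,W}$ is a product of $dr$ linear forms, which is what makes the associated core state preparable with $dr$ photon additions and interferometers on the $M+1$ modes.
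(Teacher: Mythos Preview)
Your proof is correct and follows essentially the same route as the paper: both expand the product via the elementary symmetric polynomials in the variables $y_k = (\mbf{w}_k^T\mbf{x})^d$ (the paper cites this as Vieta's monic expansion, you derive it directly via the subset sum) and read off the $j=1$ term. Your explicit remark that $d\geq 1$ guarantees the powers $\lambda^{d(r-j)}$ do not collide is a nice point the paper leaves implicit, and your aside on the linear factorization of each factor $\lambda^d + (\mbf{w}_k^T\mbf{x})^d$ is exactly the content of the next lemma in the paper.
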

	
	\begin{proof}
		We start by recalling the following formula for the so-called \emph{monic expansion}, a direct corollary of Vieta's formula \cite{Cox2015ESP}:
		\begin{equation}\label{eq:monic-expansion}
			\prod_{k=1}^M \left( y + x_k \right) = \sum_{k=0}^M y^{M-k} e_k\left(x_1, \dots, x_M\right)
		\end{equation}
		\begin{equation}
			e_{k}\left(\mbf{x}\right) = \sum_{1 \leq j_1 < \dots < j_k \leq M} x_{j_1} \dots x_{j_k}
		\end{equation}
		where $e_k$ is a $k$-th elementary symmetric polynomial. We can therefore relate the monic expansion to the Waring decomposition by applying the first to: 
		\begin{multline}
			\prod_{k=1}^r \left( \lambda^d + \left( \mbf{w}_k^T\mbf{x} \right)^d \right) \ = \\
			= \ \sum_{k=0}^{r} \lambda^{d\left(r - k\right)}e_k\left( \left(\mbf{w}_1^T\mbf{x}\right)^d, \dots, \left(\mbf{w}_r^T\mbf{x}\right)^d \right) \label{eq:L-sum}
		\end{multline}
		and noticing that the second term ($k=1$) in the expansion in \autoref{eq:L-sum} contains exactly the Waring decomposition of rank $r$.
		\begin{multline}
			\label{eq:l-first-term}
			\lambda^{d\left(r - 1\right)}e_1\left( \left(\mbf{w}_1^T\mbf{x}\right)^d, \dots, \left(\mbf{w}_r^T\mbf{x}\right)^d \right) =\\ 
			= \ \lambda^{d\left(r - 1\right)}\left( \sum_{i = 1}^r \left( \mbf{w}_i^T \mbf{x} \right)^d \right)
		\end{multline}
	\end{proof}
	To obtain our result for homogeneous polynomials, the last missing step is to show that each factor $\left( \lambda^d + \left( \mbf{w}_k^T\mbf{x} \right)^d \right)$ in \autoref{eq:L-sum} can be decomposed into a product of linear factors:
	\begin{lemma}
		\label{th:monic-and-waring}
		Let $\omega_n = e^{\frac{2 \pi i}{n}}$ be the primitive n-th root of unity. Then we have:
		\begin{equation}
			L_{r, d, W} = \prod_{k=1}^r \prod_{i=0}^{d-1} \left( \lambda - \omega_{2d}\omega_d^i \mbf{w}_k^T\mbf{x} \right) \in \left[ 1^{dr} \right]_{M} \label{eq:L-prod}
		\end{equation}
	\end{lemma}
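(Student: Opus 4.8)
The plan is to reduce the statement to the elementary fact that, over $\mathbb{C}$, the binomial $\lambda^d + a^d$ splits into $d$ linear factors, and then to read off the class membership directly from the resulting factorization.

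First I would fix an index $k$ and abbreviate $a := \mbf{w}_k^T\mbf{x}$, so that $\lambda^d + a^d$ is a monic polynomial of degree $d$ in $\lambda$ with coefficients in $\mathbb{C}[x_1,\dots,x_M]$. Its roots are the $d$-th roots of $-a^d$: since $(\omega_{2d}\omega_d^i)^d = \omega_{2d}^d = e^{\pi i} = -1$ and the $d$ numbers $\omega_{2d}\omega_d^i$, $i=0,\dots,d-1$, are pairwise distinct, one gets $\lambda^d + a^d = \prod_{i=0}^{d-1}\left(\lambda - \omega_{2d}\omega_d^i\, a\right)$. The cleanest way to certify this as a genuine polynomial identity in $\mathbb{C}[\lambda,a]$ (valid also at $a=0$) is to substitute $z = \lambda/(\omega_{2d}a)$ into the standard identity $\prod_{i=0}^{d-1}(z-\omega_d^i) = z^d-1$, multiply through by $(\omega_{2d}a)^d = -a^d$, and simplify.

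Next I would take the product over $k = 1,\dots,r$. Since $L_{r,d,W}(\lambda,\mbf{x}) = \prod_{k=1}^r\bigl(\lambda^d + (\mbf{w}_k^T\mbf{x})^d\bigr)$ by definition (cf.\ \autoref{th:monic-and-waring-0}), substituting the factorization above for each $k$ yields exactly $L_{r,d,W} = \prod_{k=1}^r\prod_{i=0}^{d-1}\left(\lambda - \omega_{2d}\omega_d^i\,\mbf{w}_k^T\mbf{x}\right)$, which is the displayed formula. Each of the $dr$ factors is a homogeneous linear form in the enlarged set of $M+1$ variables $(\lambda,x_1,\dots,x_M)$, so $L_{r,d,W}$ is homogeneous of degree $dr$ and factorizes into $dr$ linear forms; by definition this places it in the class $\left[1^{dr}\right]$ over the $M+1$ modes, and hence, by the result of \cite{Kopylov:2025zcs} recalled in \autoref{sec:stellar-repr}, the corresponding core state can be built from the vacuum with $dr$ photon additions interleaved with $dr$ multiport interferometers acting on those modes alone.

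There is no real obstacle here: the entire content is the factorization of $\lambda^d + a^d$ over $\mathbb{C}$. The only points requiring care are the root-of-unity bookkeeping — keeping the ``sign'' factor $\omega_{2d}$, which supplies $\omega_{2d}^d=-1$, separate from the cycling factor $\omega_d$ — and remembering that the factorization is applied with $a$ standing for the linear form $\mbf{w}_k^T\mbf{x}$, so that the resulting factors are linear in $(\lambda,\mbf{x})$ jointly rather than in $\mbf{x}$ alone.
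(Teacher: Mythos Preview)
Your proof is correct and follows essentially the same approach as the paper: factor each $\lambda^d + (\mbf{w}_k^T\mbf{x})^d$ into $d$ linear forms via the roots-of-unity identity $a^d+b^d=\prod_{j=0}^{d-1}(a-\omega_{2d}\omega_d^j b)$, then take the product over $k$. Your extra care with the substitution $z=\lambda/(\omega_{2d}a)$ and the mode-count bookkeeping is a nice touch but not a departure in method.
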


	We also notice that the above construction can easily be generalized to the preparation of arbitrary, non-homogeneous polynomials from homogeneous ones by a so-called \emph{dehomogenization} \cite{Cox2015}.
	\begin{lemma}
		\label{th:dehomogeniztion}
		For any $P \in \mathbb{C}\left[x_1, \dots, x_M\right]$ there exists $P^h \in \mathbb{C}\left[y, x_1, \dots, x_M\right]$; $P^h$ -- homogeneous, such that $\forall \mbf{x} \in \mathbb{C}^M: P\left(\mbf{x}\right) = P^h\left(1, \mbf{x}\right)$.
		Physically, one can prepare any core state involving superpositions of variable numbers of photons by first preparing the homogeneous state $\vert \psi^h \rangle$ corresponding to $P^h$, then displacing the ancillary mode used for dehomogenization by $1$ and finally conditioning on measuring $0$ photons in this ancillary mode: $\ket{\psi} = \bra{0}_0 \hat{D}_0\left( 1 \right) \ket{\psi^h}$.
	\end{lemma}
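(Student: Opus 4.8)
The plan is to split the claim into a purely algebraic part --- existence of a homogeneous $P^h$ that dehomogenizes to $P$ --- and a physical part --- that displacing the homogenizing mode by $1$ and projecting it on the vacuum implements that dehomogenization at the level of states. For the algebraic part I would write $P = \sum_{j=0}^{d} P_j$ as the sum of its homogeneous components, $P_j$ of degree $j$ and $d = \deg P$, and set
\begin{equation}
  P^h(y, \mbf{x}) = \sum_{j=0}^{d} y^{\,d-j}\, P_j(\mbf{x}),
\end{equation}
which is homogeneous of degree $d$ in the $M+1$ variables $(y, x_1, \dots, x_M)$ and satisfies $P^h(1, \mbf{x}) = \sum_j P_j(\mbf{x}) = P(\mbf{x})$. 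This is the classical projective-closure homogenization, so existence is immediate; and because $P^h$ is homogeneous, the associated $(M+1)$-mode core state $\ket{\psi^h} = P^h(\adhat_0, \mbf{\adhat})\ket{\mbf{0}}$, with mode $0$ carrying the homogenizing variable $y$, is exactly of the type already covered by \autoref{th:monic-and-waring-0} and \autoref{th:monic-and-waring} and is therefore preparable with interferometers, photon additions and one PNR post-selection. The only thing that then remains is the state-level identity $\ket{\psi} = \bra{0}_0 \hat{D}_0(1)\ket{\psi^h}$, up to the normalization dropped throughout.

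To check that identity I would expand $P^h$ in powers of its first argument, $P^h(y, \mbf{x}) = \sum_{m=0}^{d} y^m Q_m(\mbf{x})$, so that $\ket{\psi^h} = \sum_m (\adhat_0)^m Q_m(\mbf{\adhat})\ket{\mbf{0}} = \sum_m \sqrt{m!}\,\ket{m}_0 \otimes \ket{\chi_m}$ with $\ket{\chi_m} = Q_m(\mbf{\adhat})\ket{\mbf{0}}$. Using $\bra{0}\hat{D}(\beta) = \bra{-\beta}$ and the coherent--Fock overlap $\dbraket{-\beta}{m} = e^{-|\beta|^2/2}(-\beta^*)^m/\sqrt{m!}$, the $\sqrt{m!}$ factors cancel and the series collapses termwise to
\begin{equation}
  \bra{0}_0 \hat{D}_0(\beta)\ket{\psi^h} = e^{-|\beta|^2/2} \sum_m (-\beta^*)^m\, Q_m(\mbf{\adhat})\ket{\mbf{0}} = e^{-|\beta|^2/2}\, P^h(-\beta^*, \mbf{\adhat})\ket{\mbf{0}}.
\end{equation}
Choosing the displacement so that the first argument specializes to $1$ --- i.e.\ $\beta = -1$, or $\beta = 1$ after the harmless replacement $y \mapsto -y$ in the homogenization, according to the sign convention fixed for $\hat{D}$ --- gives $e^{-1/2} P^h(1, \mbf{\adhat})\ket{\mbf{0}} = e^{-1/2} P(\mbf{\adhat})\ket{\mbf{0}}$, i.e.\ $\ket{\psi}$ up to the global normalization we systematically omit. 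One can reach the same conclusion by conjugating the displacement through the polynomial via $\hat{D}_0(\beta)\adhat_0\hat{D}_0(-\beta) = \adhat_0 - \beta^*$, acting on the displaced vacuum $\hat{D}_0(\beta)\ket{0}_0 = \ket{\beta}_0$, and noting that $\bra{0}_0$ annihilates every monomial still carrying a factor $\adhat_0$.

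I do not expect a real obstacle: both halves are routine. The only delicate point is the coherent-state bookkeeping and the resulting prefactor $e^{-|\beta|^2/2}$ --- harmless, as it affects only the success probability and not the fidelity --- together with fixing the sign/phase convention of $\hat{D}$ once and for all so that ``displace by $1$'' and ``set the homogenizing variable to $1$'' genuinely coincide. The substantive ingredient, namely that $P^h$ is homogeneous and hence already within reach of the earlier construction, is delivered by the preceding steps, and the dehomogenization itself is an exact algebraic identity with no approximation.
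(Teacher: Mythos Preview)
Your proposal is correct and follows essentially the same route as the paper: the same projective homogenization $P^h(y,\mbf{x})=\sum_j y^{d-j}P_j(\mbf{x})$, and the same physical argument that displacing mode~$0$ and projecting on $\ket{0}_0$ substitutes a scalar for $\adhat_0$ in the stellar polynomial. The paper carries out the state identity via the conjugation $\hat{D}_0(\alpha)\,\adhat_0\,\hat{D}_0^{\dagger}(\alpha)$ acting on the displaced vacuum --- exactly your second route --- whereas your primary computation uses the coherent-state overlap $\bra{0}\hat{D}(\beta)=\bra{-\beta}$ directly; these are equivalent, and your explicit flagging of the sign/phase convention is in fact more careful than the paper's own derivation.
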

	
	\subsection{Proof of \autoref{th:main-theorem}}
	\label{sec:main-proof}
	Finally, we can piece together these results to prove our main theorem:
	\begin{proof}
		Let $P_\psi\left(\mbf{x}\right) = \sum p_\mbf{n} \mbf{x}^\mbf{n}$ of degree $d$, not necessarily homogeneous, be the stellar polynomial of the target core state $\vert \psi \rangle$. Denote by $P_{\psi^h}\left( \mbf{y}\right)$ its homogenization ($\mbf{y} =\left[y, x_{1},\dots,x_M\right]^T \in \mathbb{C}^{M+1}$) and compute its Waring decomposition, yielding the Waring rank $r$ and the vectors $\mbf{w}_k \in \mathbb{C}^{M + 1}$ such that
		\begin{equation}
			P_{\psi^h}\left(\mbf{y}\right) = \sum_{k=1}^r \left( \mbf{w}_k^T\mbf{y} \right)^d
		\end{equation}
		
		Since, by \autoref{th:monic-and-waring}, with the addition of an ancillary variable $\lambda$ we have that $L_{\psi^h} = L_{r, d, W}\left(\lambda, \mbf{y}\right) \in \left[1^N\right]_M$, we can prepare the corresponding core state using the method described in \cite{Kopylov:2025zcs} (included here in \refappendix{App:VectorToUnitary}), starting from the multimode vacuum state and alternating photon additions with multiport interferometers. Then, according to \autoref{tbl:state-poly-corresp}, conditioning the ancillary mode $\ahat_{\lambda}$ on having $d\left(r-1\right)$ photons corresponds to collapsing the sum onto the coefficient next to $\lambda^{d\left(r-1\right)}$, which by \autoref{eq:l-first-term} gives us $P_{\psi^h}$.
		
		Finally, by \autoref{th:dehomogeniztion}, applying displacement and vacuum-projection on mode $\adhat_y$ to $P_{\psi^h}\left(\mbf{y}\right)$ produces the dehomogenized polynomial $P\left(\mbf{x}\right)$.
	\end{proof}
	
	\section{Photonic implementation of the general scheme}
	\label{sec:photonic-impl}
	
	\begin{figure*}[t]
		\centering
		\subfloat[\label{fig:main-circuit-boson}]{
			\tikzset{ggext/.style={opacity=0, fill opacity=1, rounded corners, fill=blue!10, inner xsep=2pt, inner ysep=6pt, thin, xshift=-1cm, yshift=2mm}}
			\centering
			\begin{quantikz}[column sep=0.3cm]
				\lstick[3]{$\ket{1}^{\otimes N}$} \gategroup[3, steps=6, background, style=ggext]{} & & \gate[7][1cm]{\hat{\bm{U}}} && \meter{\text{$\ketbra{0}{0}$}} \\
				& \setwiretype{n} \vdots & & \setwiretype{q} & \meter{\text{$\ketbra{0}{0}$}} \\ 
				& & \gategroup[5, steps=1, background, style=gg]{} && \meter{\text{$\ketbra{0}{0}$}} \\[0.5cm]
				\lstick[4]{$\ket{0}^{\otimes \left(M + 2\right)}$} & & & & \meter{\ketbra{n}{n}}\gategroup[1, steps=1, background, style=pg]{} \\
				& & & & \gate{\hat{D} \left(1\right)}\gategroup[1, steps=2, background, style=dg]{} & \meter[label style={yshift=\metershift}]{\text{$\ketbra{0}{0}$}} \\
				& \setwiretype{n} \vdots & & \setwiretype{q} & &\rstick[2]{$\ket{\psi}$} \\
				& & & & &
			\end{quantikz}
		}
		\subfloat[\label{fig:main-circuit-gaussian-boson}]{
			\tikzset{ggext/.style={opacity=0, fill opacity=1, rounded corners, fill=blue!10, inner xsep=2pt, inner ysep=6pt, thin, xshift=-1cm, yshift=2mm}}
			\centering
			\begin{quantikz}[column sep=0.3cm]
				\lstick[3]{$\ket{0}^{\otimes N}$} \gategroup[3, steps=6, background, style=ggext]{} & & \gate[7][1cm]{\hat{\bm{G}}} && \meter{\text{$\ketbra{1}{1}$}} \\
				& \setwiretype{n} \vdots & &  \setwiretype{q} & \meter{\text{$\ketbra{1}{1}$}} \\ 
				& & \gategroup[5, steps=1, background, style=gg]{} && \meter{\text{$\ketbra{1}{1}$}} \\[0.5cm]
				\lstick[4]{$\ket{0}^{\otimes \left(M + 2\right)}$} & & & & \meter{\ketbra{n}{n}}\gategroup[1, steps=1, background, style=pg]{} \\
				& & & & \gate{\hat{D} \left(1\right)}\gategroup[1, steps=2, background, style=dg]{} & \meter[label style={yshift=\metershift}]{\text{$\ketbra{0}{0}$}} \\
				& \setwiretype{n} \vdots & & \setwiretype{q} & &\rstick[2]{$\ket{\psi}$} \\
				& & & & &
			\end{quantikz}
		}
		\caption{Alternative representations of the proposed method, where approximate photon additions are implemented through injection of single photons coupled by weakly reflecting beam-splitters (\autoref{fig:main-circuit-boson}) as in a boson sampler, or by weak single-mode squeezing operations followed by heralding of single photons (\autoref{fig:main-circuit-gaussian-boson}) as in a Gaussian boson sampler. }\label{fig:main-circuit-alt}
	\end{figure*}
	
	After having outlined the way in which the preparation of a generic multimode core state can be decomposed into an alternating sequence of photon additions and interferometers, we shall see how this construction relates to other photonic quantum information protocols by considering in greater detail the physical implementation of the photon addition steps. As a first remark, notice that the multiport interferometers ($U_1$ to $U_N$ in \autoref{fig:main-circuit}) effectively allow us to perform photon addition on arbitrary linear combination of modes, and this is their only purpose in our construction. Moreover, a possible way to implement a photon addition on a state $\vert \psi \rangle$ is to let this state interact with the first Fock state $\vert 1 \rangle$ of another mode through a beam-splitter, whose unitary evolution is described by $\hat{U}_{\mathrm{BS}}(\theta) = e^{  \theta (\hat{a}^{\dagger}_{1} \hat{a}_{2} - \hat{a}_{1} \hat{a}^{\dagger}_{2} )}$, where $\hat{a}_{1},\hat{a}_{2}$ are the annihilation operators of the modes described by the state $\vert \psi \rangle$ and $\vert 1 \rangle$, respectively. If $\theta$ is sufficiently small and if the output two-mode state is conditioned upon measuring the vacuum on the second mode, the resulting state will be (see \refappendix{sec:asymptotic-photon-addition}):
	\begin{equation}
		\langle 0 \vert_{B} \hat{U}_{\mathrm{BS}}(\theta)  \vert \psi \rangle \otimes \vert 1 \rangle  \ = \ \theta \left[ \hat{a}^{\dagger}_{1} \vert \psi \rangle  \ + \ \mathcal{O}(\theta^2) \right]
	\end{equation}
	which is an approximation to a photon-added state. The probability with which this protocol will work is given by the squared norm of this output state, $p_0 = \theta^2 (\langle \adhat_1 \ahat_1 \rangle_\psi + 1) + \mathcal{O}(\theta^3)$ and it is thus proportional to the average photon number of the state on which we wish to perform the photon addition. Therefore, up to an error proportional to $\theta^2$, we can replace the photon-addition elements $\adhat$ in \autoref{fig:main-circuit} by adding an equal number of modes prepared in the Fock state $\ketbra{1}{1}$, combine all the inputs into a single, larger interferometer with $N + M + 2$ input modes, described by the unitary $\hat{\bm{U}}$, and then conditioning on $N$ output modes to be in the vacuum state. This rewriting of our procedure, which identifies it as an instance of boson sampling with post-selection and is sketched schematically in \autoref{fig:main-circuit-boson}, makes it clear that we are effectively injecting $N$ photons in the output and recovering $N - d$ in the PNR projection on one ancillary mode, so that exactly $d$ photons are staying in the final state (before dehomogenization). The extra photons that have to be injected and then measured act as \emph{catalyzers} for the preparation of the most generic multiphoton state, an idea upon which we shall elaborate further after \autoref{thm:photon-catalysis}.
	
	A different scheme to implement  photon addition, perhaps more commonly employed in all-optical platforms, where single-photon sources might not be readily available, relies on a two-mode squeezing operation, performed by a nonlinear crystal, and post-selection conditioning on having one photon on the ancillary mode \cite{Roeland_2022}. In this case, the ancillary input mode is in the vacuum state and no additional photons have to be injected: the energy is provided by the squeezing operation, through the pump. This option allows us to rewrite the circuit as a post-selected Gaussian boson sampling as depicted in \autoref{fig:main-circuit-gaussian-boson}. Notice, in particular, that now we have the vacuum state of $N + M + 2$ modes followed by a general multimode \emph{Gaussian unitary operation} $\hat{\bm{G}}$. Using the Bloch-Messiah decomposition \cite{PhysRevA.71.055801} and taking into account that any interferometer whose input modes are in the vacuum leaves them in the vacuum, we can rewrite $\hat{G}$ acting on $\vert 0 \rangle^{\otimes (N + M + 2)}$ as $N$ weak single-mode squeezers acting on the vacuum states of the ancillary modes, followed by a general interferometer and finally by the conditioning projective measurements, on $\ketbra{1}{1}$ for the $N$ ancillary modes used to implement the approximate addition and on $\ketbra{n}{n}$ for the ancillary mode needed to prepare $\vert \psi^{h} \rangle$. In this case, no non-Gaussian resource is used before the measurements, but the number of catalyzing photons is doubled, since the process underlying single-mode squeezing always produces photons in pairs.
	
	\section{Assessment of the efficiency of the method}\label{sec:bounds}
	\subsection{Bounds on resource use}
	
	Using the Waring decomposition introduced in \ref{th:main-theorem},
    the number of photon additions to be performed is given by $r \times d$, hence the order of the PNR projection upon which one has to condition to obtain the desired state is $d(r-1)$: in physical terms, $r \times d$ photons are injected into the interferometer and $d(r-1)$ of them are recovered when post-selecting on the appropriate Fock state of the ancillary mode, such that the final state contains exactly $d$ photons. Both these quantities scale linearly with the Waring rank $r$ of the associated polynomial. Thus, to assess the cost of generating a state with our method we need to understand how $r$ depends on the number of modes (resp., of variables) $M$ and on the number of photons (resp., degree of the polynomial) $d$.
	
	While the value of $r$ depends on the particular polynomial, the concept of \emph{generic rank} provides a way of reasoning about a general or \enquote{typical} polynomial. The value of the generic rank, as a function of the degree $d$ and number of variables $M$, $\bar{r}_{\mathrm{gen}}\left(M, d\right)$ is given by the Alexander-Hirschowitz theorem \cite{Landsberg2012_AH}: 
	
	\begin{equation}
		\bar{r}_{\mathrm{gen}}\left(M, d\right) = \left\lceil \frac{1}{M} \binom{M + d - 1}{d} \right\rceil
	\end{equation}
	
	The precise statement of the theorem and a geometrical interpretation of the generic rank are presented in the Appendix (see \autoref{th:alexander-hirshowitz}). From a physical point of view, the generic rank can be interpreted as follows:
	
	
	\begin{itemize}
		\item A randomly selected homogeneous core state has generic rank with probability $1$, but there exist states with ranks that are both higher (supergeneric) or lower (subgeneric) (this is implied by the fact that the set of tensors of non-generic rank has measure zero, see \autoref{th:measure-zero-lemma}).
		\item Any homogeneous core state, even of a supergeneric rank, could be approximated by a homogeneous core state of generic rank with fidelity arbitrarily close to 1.
	\end{itemize}
	
	Asymptotically, for $M \gg d$, we have the following scaling for the number of photon additions: 
	\begin{equation}
		d \times \bar{r}_{\mathrm{gen}}(M,d) \ = \ \dfrac{M^{d-1}}{(d-1)!}\left( 1 + O\left( d^2 / M \right) \right)
	\end{equation}
	which is polynomial in $M$. Given the symmetry of the binomial coefficient, for $d \gg M$ one gets a similar behaviour with $d$ and $M$ exchanged, i.e. a polynomial in $d$. Finally, for $M= c\times d \gg 1$, we get:  
	\begin{equation}
		d \times \bar{r}_{\mathrm{gen}}(M,d) \ = \ \dfrac{a_{c}}{ \sqrt{2 \pi d }} (k_{c})^{d} \left( 1 + O\left( d^{-1} \right) \right)
	\end{equation}
	where $a_{c} =\dfrac{1}{c}\sqrt{\frac{c+1}{c}}$ and $k_{c} = \frac{(c+1)^{c+1}}{c^c} > 1$,
	which is instead an exponential scaling in $d$, but only in the case where both the number of modes and the number of photons are simultaneously large. 
	
	It is worth noting that practically computing the best low rank approximation, or even the exact rank of a given tensor is an NP-hard problem (see \autoref{sec:numerical-methods}). However, the maximal rank of a tensor is always upper bounded by twice its generic rank \cite{Blekherman2015-sv} and tighter upper bounds are known for particular values of $d$ and $M$ \cite{Bernardi_2018}. \autoref{tbl:tensor-rank} summarizes these results.
	
	\begin{table}[ht]
		\centering
		\renewcommand{\arraystretch}{1.5}
		\begin{tabular}{c|c|c|c}
			$d$ & $M$ & Maximal Rank & Photon Additions \\
			\hline
			3 & 3 & 5 & 15 \\
			4 & 3 & 7 & 28 \\
			5 & 3 & 10 & 50\\
			3 & 4 & 7 & 21 \\
			any & 3 & $\leq \left\lfloor \frac{d^2 + 6d + 1}{4} \right\rfloor$ & $\leq \left\lfloor \frac{d^3 + 6d^2 + d}{4} \right\rfloor$ \\
			any & any & $\leq 2 \bar{r}_{\mathrm{gen}}\left(M, d\right)$ & $\leq 2 d\bar{r}_{\mathrm{gen}}\left(M, d\right)$
		\end{tabular}
		\caption{Maximal rank and corresponding number of photon additions depending on the number of photons $d$ and number of modes $M$.}
		\label{tbl:tensor-rank}
	\end{table}
	
%
%
	We also have a straightforward corollary of \autoref{th:elementary-symmetric-decomposition} which lower bounds the number of photon additions:
	\begin{corollary}
		\label{thm:photon-catalysis}
		If a homogeneous multiphoton state with $d$  photons is described by $M\geq 2$ intrinsic modes, then the minimum number $N$ of photon additions required to prepare it, within the established constraints, is lower bounded by $\max (d,M)$. 
	\end{corollary}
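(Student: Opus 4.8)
The plan is to use the structural remark made just above the statement: within the established toolkit, any preparation of a homogeneous $d$-photon state $\ket{\psi}$ with $M$ intrinsic modes amounts to first building a seed state whose stellar polynomial lies in $\left[1^{N}\right]_{M+1}$ — one ancillary mode $\lambda$, with the understanding, justified below, that allowing further ancillary modes only strengthens the bound — and then projecting the ancilla onto a Fock state. In polynomial language the seed is $L(\lambda,\mbf{x})=\prod_{k=1}^{N}L_{k}(\lambda,\mbf{x})$ with each $L_{k}=a_{k}\lambda+\ell_{k}(\mbf{x})$ a linear form ($a_{k}\in\mathbb{C}$, $\ell_{k}$ linear in $\mbf{x}$ alone), and, up to normalization (cf.\ \autoref{tbl:state-poly-corresp}), $P_{\psi}(\mbf{x})$ is the coefficient of $\lambda^{m}$ in $L$ for some integer $m\geq 0$. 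The whole task is therefore to lower-bound $N$ using only that this coefficient is homogeneous of degree $d$ and has $M$ essential variables.

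First I would prove $N\geq d$. Expanding the product, the coefficient of $\lambda^{m}$ in $L$ equals $\sum_{|S|=m}\bigl(\prod_{k\in S}a_{k}\bigr)\prod_{k\notin S}\ell_{k}(\mbf{x})$, which is homogeneous of degree $N-m$ in $\mbf{x}$. Matching degrees with $P_{\psi}$ forces $N-m=d$, hence $N\geq d$, with equality precisely when no post-selection is performed ($m=0$), i.e.\ for the ancilla-free class $\left[1^{d}\right]_{M}$.

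Next I would establish $N\geq M$. The coefficient above is a polynomial $Q\bigl(\ell_{1}(\mbf{x}),\dots,\ell_{N}(\mbf{x})\bigr)$ in the $N$ linear forms $\ell_{1},\dots,\ell_{N}$; by the chain rule each first partial $\partial_{i}P_{\psi}$ is a scalar combination of the $N$ polynomials $(\partial_{j}Q)\bigl(\ell_{1}(\mbf{x}),\dots,\ell_{N}(\mbf{x})\bigr)$, so the number of essential variables of $P_{\psi}$, namely $\dim\operatorname{span}\{\partial_{1}P_{\psi},\dots,\partial_{M}P_{\psi}\}$, is at most $N$. Since a core state has exactly as many intrinsic modes as its stellar polynomial has essential variables — a homogeneous $P$ is constant along every direction in $\ker(\mbf{v}\mapsto\partial_{\mbf{v}}P)$, hence depends only on the orthogonal complement, which a suitable interferometer rotates onto a coordinate subspace — we get $M\leq N$. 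Combining the two inequalities yields $N\geq\max(d,M)$, which incidentally shows that the $e_{d}$-decomposition of \autoref{th:elementary-symmetric-decomposition}, realized as in the discussion above, cannot be improved in the worst case.

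The step I expect to be the main obstacle is the identification of ``intrinsic modes'' with the number of essential variables of the stellar polynomial: it has to be stated so as to be manifestly invariant under passive linear optics and to invoke no operation outside the allowed set. The remaining checks are routine: (i) admitting extra ancillary modes, each projected onto its own Fock state, still produces a coefficient that is a polynomial in the same $N$ linear forms $\ell_{k}$ and of degree $N$ minus the total number of detected photons, so neither bound degrades; (ii) an interferometer inserted immediately before the detectors only performs a linear change of the $\ell_{k}$ and hence cannot enlarge their span; and (iii) displacements are irrelevant here, since for a homogeneous target they would enter only through the dehomogenization step, which is not used.
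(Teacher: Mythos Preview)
Your proof is correct and follows essentially the same route as the paper's: both argue that any polynomial obtained as a coefficient of $\lambda^{m}$ in $L\in[1^{N}]_{M+n}$ has degree $N-m$ (giving $N\geq d$) and is a polynomial in the $N$ linear forms $\ell_{k}$, hence depends on at most $N$ essential variables/intrinsic modes (giving $N\geq M$). Your version is more explicit about the chain-rule bound on essential variables and about the robustness to extra ancillary modes and pre-detection interferometers, but the underlying argument is the same.
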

	Here, the number of intrinsic modes is defined as the minimum number of modes, over all possible mode bases choices, that are not in the vacuum state (see \cite{RevModPhys.92.035005}). Notice that states in the class $\left[1^d\right]_M$ have at most $\min\left(d, M\right)$ intrinsic modes, since one can attach at most one independent mode to each of the $d$ photon additions. More generally, if $M >d$ and the state has $M$ intrinsic modes, \autoref{thm:photon-catalysis} implies that we have to create at least $N \geq M$ photons, $N-d$ of which will have to be retrieved by a post-selection on a PNR measurement outcome; it is thus appropriate to think about our protocol as a \emph{multiphoton catalysis} mediated by these $N-d$ additional photons which are used only to generate the required number of intrinsic modes, thus to prepare states outside the class $\left[1^d\right]_M$.

	\subsection{Probability of success}
	
	\begin{figure}[t]
		\begin{tikzpicture}
			\definecolor{mplblue}{rgb}{0.121, 0.466, 0.705}  
			\definecolor{mplorange}{rgb}{1.000, 0.498, 0.054} 
			\definecolor{mplgreen}{rgb}{0.172, 0.627, 0.172}  
			
			\begin{axis}[
				name=axis,
				xmode=linear,
				ymode=linear,
				ytick={0, 0.05, ..., 0.5},
				xtick={0, 0.5, ..., 5.0},
				xlabel={Scaling factor},
				ylabel={Probability of projection success}]
				\addplot [mplblue, domain=0:5,samples=201, style=thick] {6! * 8 * x^4 / (8! + 6!*8*x^4 + 4!*4*6*x^8 + 2!*8*4*x^12 + 16*x^16)};\label{pgf:e1p4} 
                
				\addplot [mplblue, dashed, style=thick] table[col sep=comma] {e2_psi_4.csv};\label{pgf:e2p4}
				
				\addplot [mplorange, style=thick] table[col sep=comma] {e1_psi_8.csv};\label{pgf:e1p8}
				\addplot [mplorange, dashed, style=thick] table[col sep=comma] {e2_psi_8.csv};\label{pgf:e2p8}
				
				\addplot [mplgreen, style=thick] table[col sep=comma] {e1_psi_9.csv};\label{pgf:e1p9}
				\addplot [mplgreen, dashed, style=thick] table[col sep=comma] {e2_psi_9.csv};\label{pgf:e2p9}
			\end{axis}
			
			\node[draw, fill=white, inner sep=0, below left=0.5em] at (axis.north east) {\small
				\begin{tabular}{ccl}
					Waring & ESP & \\
					\ref{pgf:e1p4} & \ref{pgf:e2p4} & $\ket{\Psi_4}$\\
					\ref{pgf:e1p8} & \ref{pgf:e2p8} & $\ket{\Psi_8}$\\
					\ref{pgf:e1p9} & \ref{pgf:e2p9} & $\ket{\Psi_9}$
			\end{tabular}};
		\end{tikzpicture}
		\caption{Probability of successfully detecting the right number of photons to obtain $\ket{\Psi_4}, \ket{\Psi_8}, \ket{\Psi_9}$ from \autoref{tbl:selection-of-states}. The solid lines show the final PNR detection probability for the method described in \autoref{th:main-theorem}, while the dashed lines show the PNR detection probability of the method described in \autoref{th:elementary-symmetric-decomposition}, both maximized over 25 possible decompositions. The scaling $1$ corresponds to the Frobenius norm $\|W\| = \sqrt{N}$.}\label{fig:probability}
	\end{figure}

	It is worth observing that the coefficients of $\lambda^{d(r-1)}$ in $L_{r,d,W}$ and in $L_{r,d, \alpha W}$ are identical, up to a scaling factor, and thus that the scaling coefficient $\alpha$ is a freely adjustable parameter, which can impact the probability of success. It can effectively steer the distribution of photon number measurements on the ancillary mode to peak at $d(r-1)$. Physically, this scaling factor corresponds to the relative weight between the ancilla and the rest of the modes in the state at the output of the addition part $L_{r,d, \alpha W} = \prod \left(\lambda^d + \alpha^d \left( \mbf{w}_k^T \mbf{x}\right)^d\right)$, thus changing the interferometers $\hat U_k$ in \autoref{fig:main-circuit}. This highlights that the choice of the interferometers in our state preparation protocol is not unique and can, therefore, be optimized. Practically calculating the probability of successfully projecting onto the required amount of photons in the ancillary mode can be done symbolically or numerically. For the simple but nontrivial case where $W \propto I$, the explicit formula can be provided to highlight the nonlinear behavior of such probability.

	\begin{corollary}
		If $W = wI_{M \times M}$ -- rescaled identity matrix, $w \in \mathbb{C}$, then the probability of successfully obtaining $P_\psi$ from $L_\psi$ by \autoref{th:main-theorem} is given by \autoref{eq:identity-prob} and plotted in \autoref{fig:probability}.
		\begin{equation}
			\label{eq:identity-prob}
			\operatorname{Prob} = \frac{\left(d\left(M - 1\right)\right)!d! M w^{2d}}{\sum_{k=0}^M \left(d\left(M - k\right)\right)! \left(d!\right)^k \binom{M}{k}  w^{2dk}}
		\end{equation}
	\end{corollary}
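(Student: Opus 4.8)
The plan is to write the seed state $\ket{L_{\psi^h}}$ explicitly in the Fock basis for $W = wI_{M\times M}$, decompose it into sectors labelled by the photon number of the ancillary ($\lambda$) mode, and then read off the projection probability as a ratio of squared norms.

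First I would specialise the Waring data: $W = wI_{M\times M}$ forces $r = M$ with $\mbf{w}_k = w\,\mbf{e}_k$, so that $P_\psi(\mbf{x}) = \sum_{k=1}^M (w\,\mbf{e}_k^T\mbf{x})^d = w^d\sum_{k=1}^M x_k^d$ is a rescaled power sum, already homogeneous of degree $d$; hence the dehomogenisation step of \autoref{th:main-theorem} is not needed and the PNR projection onto $d(M-1) = d(r-1)$ photons is the only probabilistic step. Correspondingly $L_{M,d,W}(\lambda,\mbf{x}) = \prod_{k=1}^M(\lambda^d + w^d x_k^d)$, and \autoref{eq:monic-expansion} (the monic expansion), applied with $y = \lambda^d$ and arguments $w^d x_k^d$, gives
\[
L_{M,d,W}(\lambda,\mbf{x}) \ = \ \sum_{k=0}^M \lambda^{d(M-k)}\, w^{dk}\, e_k\!\left(x_1^d,\dots,x_M^d\right),
\]
where $e_k(x_1^d,\dots,x_M^d) = \sum_{1\le j_1<\dots<j_k\le M} x_{j_1}^d\cdots x_{j_k}^d$ is a sum of $\binom{M}{k}$ monomials.

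Next I would translate this to the Fock basis through \autoref{tbl:state-poly-corresp}, using $(\adhat)^n\ket{0} = \sqrt{n!}\,\ket{n}$ and the orthogonality of distinct monomial Fock states. The $k$-th summand yields the sector $w^{dk}\sqrt{(d(M-k))!}\,\ket{d(M-k)}_\lambda \otimes \ket{v_k}$ with $\ket{v_k} = \sum_{1\le j_1<\dots<j_k\le M}(\adhat_{j_1})^d\cdots(\adhat_{j_k})^d\ket{\mbf{0}}$ a superposition of $\binom{M}{k}$ pairwise orthogonal Fock states, each of squared norm $(d!)^k$, so $\|\ket{v_k}\|^2 = \binom{M}{k}(d!)^k$. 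Since distinct $k$ carry distinct photon numbers $d(M-k)$ in the $\lambda$-mode, the sectors are mutually orthogonal and
\[
\bigl\|\ket{L_{\psi^h}}\bigr\|^2 \ = \ \sum_{k=0}^M w^{2dk}\,(d(M-k))!\,\binom{M}{k}(d!)^k .
\]
Projecting the $\lambda$-mode onto $\ket{d(M-1)}$ selects exactly the $k=1$ sector, whose squared norm is $w^{2d}(d(M-1))!\,M\,d!$ (using $\binom{M}{1}=M$); dividing it by the total squared norm gives precisely \autoref{eq:identity-prob}.

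The argument is essentially a bookkeeping exercise; the only points that require care are tracking the $\sqrt{n!}$ normalisations correctly and checking that the monomial Fock states occurring within each $\ket{v_k}$ and across the different $\lambda$-sectors are genuinely orthonormal. (For complex $w$ only $|w|$ enters, consistent with the formula as stated.)
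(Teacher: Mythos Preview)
Your proof is correct and follows essentially the same approach as the paper: expand $L_{M,d,W}$ via the monic expansion, compute the squared norm of the full (unnormalised) state and of the $\lambda^{d(M-1)}$ sector using the Fock-space correspondence, and take the ratio. You are simply more explicit than the paper about the orthogonality of the monomial Fock states and the $\sqrt{n!}$ factors, and your remark that only $|w|$ enters for complex $w$ is a useful clarification.
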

	
	\begin{proof}
		Denote $\stnorm{P_\psi}$ -- norm of the unnormalized quantum state corresponding to $P_\psi$ (this norm is equal, up to degree normalization, to the Bombieri \cite{BEAUZAMY1990219} norm).
		
		\begin{align}
			\stnorm{L_\psi}^2 &= \stnorm{\sum_{k=0}^M \lambda^{d\left(M - k\right)} e_k\left(w x_1, \dots, w x_M\right)}^2 \\
			&= \sum_{k=0}^M \left(d\left(M - k\right)\right)! \left(d!\right)^k \binom{M}{k} w^{2dk}
		\end{align}
		\begin{align}
			\stnorm{\lambda^{d\left(M - 1\right)}e_1\left(w\mbf{x}\right)}^2 = \left(d\left(M - 1\right)\right)!d! M w^{2d}
		\end{align}
		\begin{equation}
			\operatorname{Prob} = \frac{\stnorm{\lambda^{d\left(M - 1\right)}e_1\left(w\mbf{x}\right)}^2}{\stnorm{L_\psi}^2}
		\end{equation}
	\end{proof}
	
	Such nonlinear behavior of the probability implies that in the most general case, after finding the Waring decomposition of the target polynomial, an additional optimization should be carried out to find the scaling factor $\alpha$ maximizing the probability of success.

	\section{Special cases and examples}
	\label{sec:special-cases}
	
	As will be shown in this section, the method given in \autoref{th:main-theorem} is not always optimal in terms of resources, despite the fact that it is fully general. Some classes of states could be produced within the same framework given by \autoref{th:monic-and-waring} but with a smaller number of catalysis photons, using linear combinations of elementary symmetric polynomials, or elementary symmetric polynomials of higher degrees in place of $e_1$. We analyze in the following subsections some particular cases in which this is possible, and show how the bottleneck of the Waring decomposition can be avoided.
	
	\subsection{Linear combinations of $\{e_k\}$}
	
	Here we present a possible specialization of \autoref{th:main-theorem} for the states that are described by linear combinations of elementary symmetric polynomials. We show that those states could be obtained from \autoref{eq:L-sum} by performing a projection onto a superposition of terms on the ancillary mode, which is equivalent to a sequence of displacements and photon subtractions followed by vacuum projection.
	
	\begin{example}
		\label{ex:GHZ}
		Consider the GHZ state and its homogenization.
		\begin{align}
			\ket{\text{GHZ}} &= \frac{\ket{0}^{\otimes M} + \ket{1}^{\otimes M}}{\sqrt{2}} \\ \ket{\text{GHZ}^h} &= \frac{\ket{M}\ket{0}^{\otimes M} + \ket{0}\ket{1}^{\otimes M}}{\sqrt{2}} \\
			P_{\text{GHZ}^h} &= \frac{1}{\sqrt{2}}\left(\frac{1}{\sqrt{M!}} x_0^M + x_1 \dots x_M\right)
		\end{align}
		Note that the homogenization corresponds exactly to the first and last terms in the expansion of \autoref{eq:L-sum} for $W = I$, $r = M$, $d = 1$. Thus, by projecting this element of $\left[1^M\right]_M$ onto $\ket{\pi} \propto \left( \frac{1}{\sqrt{M!}} \ket{M} + \ket{0} \right)$, we will obtain $\ket{\text{GHZ}^h}$. To see this explicitly, we first express:
		\begin{multline}
			L_{M, 1, I} = \prod_{k = 1}^M\left(\lambda - x_k\right) = \sum_{k=0}^M \lambda^{M - k} e_k\left(\mbf{x}\right) \\
			= \lambda^M + x_1 \dots x_M + \sum_{k = 1}^{M - 1}  \lambda^{M - k} e_k\left(\mbf{x}\right).
		\end{multline}
		Now, we can easily evaluate the projection on $\ket{\pi}$:
		\begin{multline}
			\dbraket{\pi}{L_{M, 1, I}} \propto \left(\frac{1}{\sqrt{M!}}\bra{M} + \bra{0}\right)\\
			\left(\sqrt{M!}\ket{M}\ket{\mbf{0}} + \ket{0}\ket{\mbf{1}} + \sum_{k=1}^{M-1} \sqrt{k!} \ket{k}\left(\dots\right) \right) \\\propto \ket{\text{GHZ}^h}
		\end{multline}
		The projection $\bra{\pi}$ can be realized as follows. As any single variable polynomial with complex coefficients is fully factorizable, we could write $P_\pi$ as a product of factors:
		\begin{equation}
			P_\pi \propto x_0^M + M! = \prod_{j = 0}^{M - 1} \left(x_0 - r_j\right)
		\end{equation}
		where $r_{j}$ are the roots of $P_{\pi}$. Denoting by $\hat{D}\left(r \right) = \exp \{ r \hat{a}^{\dagger}_{0} - r^{*} \hat{a}_{0} \} $ the displacement operator, we rewrite:
		\begin{multline}
			\ket{\pi} \propto \prod_{j = 0}^{M - 1} \left(\adhat_0 - r_j\right) \ket{0} = \prod_{j=0}^{M-1}\hat{D}\left(r_j^*\right) \adhat_0 \hat{D}^{\dag}\left(r_j^*\right)\ket{0}
		\end{multline}
		Therefore, $\ket{\pi}$ could be prepared using $M$ displacements and $M$ photon additions. By formally reversing the circuit, we find the corresponding $\bra{\pi}$, which consists of displacements and photon subtractions followed by a vacuum projection:
		\begin{equation}
			\bra{\pi} \propto \bra{0}\hat{D}^{\dag}\left(r_{M-1}^*\right) \ahat_0 \hat{D}\left(r_{M-1}^*\right)\dots \hat{D}^{\dag}\left(r_{0}^*\right) \ahat_{0} \hat{D}\left(r_{0}^*\right)
		\end{equation}
		This example shows how using two terms of the monic expansion we are able to prepare the $M$-mode homogenized GHZ state using $M$ photon additions, $M$ photon subtractions, $M$ displacements and one vacuum projection. Directly using \autoref{th:main-theorem} would require us to use rank $r = 1 + 2^{M - 1}$ (which is the rank of two non-overlapping monomials \cite{RankOfMonomial}) and degree $d = M$, and thus inflict a much larger PNR count $M 2^{M - 1}$. Both methods should be followed by the dehomogenization procedure to obtain a true GHZ state.
	\end{example}

	\subsection{ESP decomposition for two-photon states}
	\label{sec:quadratic}
	
	Our protocol can be simplified significantly in case of two-photon states, i.e., core states of stellar rank two. This specific setting was already studied in literature \cite{10821417} using methods based on permanents. In \autoref{th:e2-corollary}, we show that Theorem II of \cite{10821417} follows very naturally from our algebraic framework, as a particular case of the decomposition described in our \autoref{th:elementary-symmetric-decomposition} for two-photon states. We thus find that any such state in $M$ intrinsic modes can be produced by injecting exactly $M$ photons. In the following corollary we explicitly give the coefficients of the corresponding decomposition by analyzing the linear transformations of quadratic forms.
	
	\begin{corollary}
		\label{th:e2-corollary}
		Suppose that the target state is described by $P_{\psi}\left(\mbf{x}\right) \in \mathbb{C}\left[x_1, \dots, x_M\right]$ of degree $2$. Then the order of the Fock state projection in the preparation of $\ket{\psi}$ can be reduced to $M^\prime- 2$, where $M^\prime$ is the rank of the quadratic form of the homogeneization of $P_{\psi}$.
		Furthermore, for homogeneous polynomials this is the shortest scheme involving the given resources (photon additions, multiport interferometers and PNR detection on one ancillary mode).
	\end{corollary}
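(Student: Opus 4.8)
My plan is to prove achievability by specialising \autoref{th:elementary-symmetric-decomposition} to degree two with the sharp value $N=M'$, and to prove optimality for homogeneous states by analysing the most general single-ancilla PNR circuit; both reduce to linear algebra of the quadratic form. \emph{Achievability:} writing $z=Ux$ for the matrix $U$ with rows $u_1^T,\dots,u_{M'}^T$, the identity $e_2(z_1,\dots,z_{M'})=\tfrac12\big[(\sum_i z_i)^2-\sum_i z_i^2\big]$ gives $e_2(Ux)=\tfrac12\,x^TU^T(J-I)U\,x$ with $J$ the $M'\times M'$ all-ones matrix. The matrix $J-I$ has eigenvalues $M'-1$ (simple) and $-1$ (multiplicity $M'-1$), hence for $M'\ge 2$ it is nonsingular and, over $\mathbb{C}$, congruent to the identity: $J-I=C^TC$ with $C$ invertible. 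I would then homogenise $P_\psi$ to $P_{\psi^h}$, whose quadratic form is a symmetric matrix $A$ with $\rank(A)=M'$, factor $2A=V^TV$ with $V$ of size $M'\times(M+1)$ (possible since $\rank(2A)=M'$), and set $U=C^{-1}V$; this yields $P_{\psi^h}(y)=e_2(u_1^Ty,\dots,u_{M'}^Ty)$. By the monic expansion \autoref{eq:monic-expansion}, $P_{\psi^h}$ is the coefficient of $\lambda^{M'-2}$ in $L=\prod_{k=1}^{M'}(\lambda+u_k^Ty)\in[1^{M'}]$, so \autoref{th:main-theorem}'s pipeline — prepare $L$ with $M'$ photon additions and interferometers, PNR-project the $\lambda$-mode onto $M'-2$ photons, then dehomogenise with $\hat D(1)$ and a vacuum projection — prepares $\ket{\psi}$, the projection order being $M'-2$ rather than the Waring value $2(M'-1)$ (a rank-$M'$ quadratic has Waring rank $M'$). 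For $M'\le 2$ the form is reducible and $\ket{\psi}\in[1^2]_M$ needs only two additions, so the content lies in the regime $M'\ge 3$, where $P_\psi$ is irreducible; there this construction recovers Theorem~II of \cite{10821417}.

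\emph{Optimality for homogeneous $P_\psi$:} here $M'$ is the rank of the quadratic form of $P_\psi$ itself and $P_\psi$ is irreducible ($M'\ge3$). Any circuit built only from photon additions, interferometers and a single PNR detector prepares, on $M$ target modes plus one ancilla $\lambda$, a state in $[1^N]$ with $N$ the number of additions, i.e.\ a stellar polynomial $L=\prod_{k=1}^N L_k$ that is a product of $N$ linear forms in $(\lambda,x_1,\dots,x_M)$; a PNR outcome $m$ leaves the target modes with stellar polynomial equal to the $\lambda^m$-coefficient of $L$. Normalising each $L_k$ to be $\lambda+u_k^Tx$ or a $\lambda$-free $v_k^Tx$, that coefficient equals $\big(\prod_l v_l^Tx\big)\,e_{f_1-m}(u_1^Tx,\dots,u_{f_1}^Tx)$; since $P_\psi$ is irreducible of degree $2$, no $\lambda$-free factor can occur, forcing $P_\psi=e_2(u_1^Tx,\dots,u_N^Tx)$. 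By the identity above the quadratic form of the right-hand side is $\tfrac12\,U^T(J_N-I_N)U$, of rank at most $\rank(U)\le N$, so $M'\le N$. Hence at least $N=M'$ photon additions (projection order $\ge M'-2$) are required, matching the achievability part; this also reproduces \autoref{thm:photon-catalysis} in the case where $M'$ equals the number of intrinsic modes.

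\emph{Main obstacle:} the linear algebra (the $e_2$-to-$(J-I)$ identity, the spectrum of $J-I$, congruence to the identity over $\mathbb{C}$, the $V^TV$ factorisation) is routine. The delicate step is the optimality argument — one must show that an irreducible quadratic can arise from a single-ancilla PNR projection on a $[1^N]$ state only as the pure coefficient $e_2(u_1^Tx,\dots,u_N^Tx)$ (which is where ruling out $\lambda$-free factors and projections onto non-Fock superpositions matters; the restriction to PNR detection on one ancillary mode is essential, as the richer measurement in \autoref{ex:GHZ} illustrates), and that the inequality $\rank\big(U^T(J_N-I_N)U\big)\le N$ is precisely the obstruction pinning the cost to $M'$.
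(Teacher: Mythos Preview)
Your proposal is correct and follows essentially the same approach as the paper: both identify $e_2(\mbf{z})$ with the quadratic form $\tfrac12\,\mbf{z}^T(J-I)\mbf{z}$ (the paper writes this matrix as $A$), observe it is nonsingular, factor both it and the target form as $S^TS$ over $\mathbb{C}$ (the paper invokes Autonne--Takagi explicitly, you use congruence to the identity), and set the change of variables to $W=S_A^{-1}S_B$ before reading off the $\lambda^{M'-2}$ coefficient via the monic expansion. Your optimality argument is the same in content as the paper's, which simply cites \autoref{thm:photon-catalysis}; you have just unpacked that bound directly via the rank inequality $\rank\!\big(U^T(J_N-I_N)U\big)\le N$, which is exactly the reasoning behind the paper's discussion preceding \autoref{th:elementary-symmetric-decomposition}.
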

	\begin{proof}
		The dehomogenization step could be applied similarly to \autoref{th:main-theorem}. Thus, suppose without loss of generality that $P_{\psi}$ is homogeneous with one extra mode, then it follows that it could be written as $P_{\psi} = \mbf{x}^T B \mbf{x}$ for a complex symmetric matrix $B^T = B$. Also, without loss of generality, we assume that $B$ is full rank, otherwise it is possible to apply a linear transformation that maps the variables to the essential variables -- a subspace of dimension $M^\prime < M$ such that $B^\prime$ is full rank. We first note that, for any symmetric (possibly complex) matrix $B$ there exists a unitary matrix $U$ and a diagonal matrix $D$ such that $B = U^T D U$ (given by Autonne-Takagi decomposition \cite{Horn_Johnson_2012}). By choosing a (generally non unique) complex square root of the diagonal matrix $D$, we can rewrite this factorization as $B = S_B^T S_B$ for some $S_B$. Let $S_A^T S_A$ be such a decomposition for the matrix $A$ that defines $e_2$:
		\begin{equation}
			\label{eq:e2-matrix}
			e_2\left(\mbf{x}\right) = \mbf{x}^T A \mbf{x}, \ \ \ \ A =
			\frac{1}{2}\begin{bmatrix}
				0 & 1 & 1 & \dots & 1 \\
				1 & 0 & 1 & \dots & 1 \\
				&   &   & \ddots & \\
				1 & 1 &   & \dots & 1 \\  
				1 & 1 & 1 & \dots & 0
			\end{bmatrix}
		\end{equation}
		
		From \autoref{eq:L-sum} in the proof of \autoref{th:monic-and-waring-0}, it follows for $r = M$ and $d = 1$:
		\begin{multline}
			\prod_{k=1}^{M}\left(\lambda + \mbf{w}_k^T \mbf{x}\right) = \sum_{k = 0}^M \lambda^{M - k} e_k\left(W \mbf{x}\right) =\\
			\lambda^{M - 2} e_2\left(W \mbf{x}\right) + \underbrace{\dots}_\text{other powers of $\lambda$}
		\end{multline}
		
		The coefficient of $\lambda^{M - 2}$ can then be rewritten as a quadratic form related by congruence (not necessarily unitary, as the matrix $W$ is arbitrary) to the initial quadratic form of $e_2\left(\mbf{x}\right)$:
		\begin{equation}
			e_2\left(W \mbf{x}\right) = \left(W \mbf{x}\right)^T A \left(W \mbf{x}\right) = \mbf{x}^T W^T A W \mbf{x}
		\end{equation}
		
		Therefore to prepare the desired state $\ket{\psi}$ it suffices to find $W$ such that
		\begin{align}
			&\mbf{x}^T W^T A W \mbf{x} = \mbf{x}^T B \mbf{x} \\
			&\iff W^T A W = \left(S_A W\right)^T \left(S_A W\right) = B = S_B^T S_B \\
			&\iff S_B = S_A W \\
			&\iff W = S_A^{-1} S_B
		\end{align}
		
		Note that $A$ is real by \autoref{eq:e2-matrix} and its symmetric part can be diagonalized as $A = T^T D T$ with $D = \operatorname{diag} \begin{bmatrix} 1 , \frac{-1}{M - 1} , \dots , \frac{-1}{M - 1}  \end{bmatrix}$, such that $S_A = \operatorname{diag}\begin{bmatrix} 1 , \frac{i}{\sqrt{M-1}} , \dots \frac{i}{\sqrt{M-1}} \end{bmatrix} T$.
		
		Since $S_A$ is invertible, such $W$ always exists and we conclude that any $2$-photon core state in $M$ modes can be prepared using one extra mode, $M$ photon additions and multiport interferometers, by conditioning on counting $M - 2$ photons in the ancillary mode. The dehomogenization procedure, when needed, just adds a second ancillary mode which is then displaced and post-selected on the vacuum.
		
		The optimality for an arbitrary state follows from the bound $N \geq M$ in \autoref{thm:photon-catalysis}.
	\end{proof}
	
	\autoref{th:e2-corollary} could be extended to $e_k, k \geq 3$, only by increasing the number of linear factors in $L$, which is demonstrated by counting the number of available parameters. Indeed, the number of parameters in $W$ that we can tune is fixed to $M^2$, but the total number of degree $d$ homogeneous polynomials in $M$ variables is
	\begin{equation}
		\label{eq:number-of-hom-polys}
		\begin{pmatrix}
			d + M - 1 \\
			d
		\end{pmatrix}
	\end{equation}
	Already for $d=3$ \autoref{eq:number-of-hom-polys} gives $\frac{M\left(M+1\right)\left(M+2\right)}{6} > M^2$ for $M > 2$. This highlights the importance of the decomposition introduced in \autoref{th:elementary-symmetric-decomposition}.
	
	\label{sec:examples}
	\begin{example}
		Consider the quantum state $\ket{\Psi_4}$ mentioned in \cite{Kopylov:2025zcs} and the corresponding stellar polynomial.
		\begin{equation}
			\ket{\Psi_4} = \frac{1}{\sqrt{4}}\left( \ket{2,0,0,0} + \ket{0,2,0,0} + \ket{0,0,2,0} + \ket{0,0,0,2} \right)
		\end{equation}
		\begin{equation}
			P_{\Psi_4}\left(x_1, x_2, x_3, x_4\right) = \frac{1}{2\sqrt{2}}\left(x_1^2 + x_2^2 + x_3^2 + x_4^2\right)
		\end{equation}
		
		We present two ways of preparing this state, first based on \autoref{th:main-theorem} and second based on \autoref{th:e2-corollary}. Note that the state is homogeneous so we omit the dehomogenization step and introduce only one ancillary mode.
		
		\paragraph{Using $e_1$.} Since the polynomial $P_{\Psi_4}$ is already presented as a sum of powers of its variables, we do not need to compute its Waring decomposition, we directly find $W = w I_{4\times4}$, where $w \approx 1.45$ according to \autoref{fig:probability}. Then the seed polynomial $L_{\Psi_4}$ is:
		\begin{align}
			L_{\Psi_4} &= \prod_{k=1}^4 \left(\lambda^2 + \left(w x_k\right)^2\right)\\
			&= \left(\lambda + i w x_1\right)\left(\lambda - i w x_1\right)\left(\lambda + i w x_2\right)\dots \\
			&= \sum_{k=0}^4 \lambda^{2\left(4 - k\right)} e_k\left(\left(w x_1\right)^2, \dots, \left(w x_4\right)^2\right) \\
			&= \lambda^8 + \lambda^6 w^2\left(x_1^2 + x_2^2 + x_3^2 + x_4^2\right) + \dots
		\end{align}
		
		The seed polynomial is factorizable and therefore the seed state can be prepared using photon additions and interferometers. Then, by conditioning on detecting $2\left(4 - 1\right) = 6$ photons on the anciliary mode $\lambda$, we obtain (up to normalization) the desired state $\ket{\Psi_4}$.
		
		\paragraph{Using $e_2$.} Denote $B$ the matrix corresponding to the quadratic form $P_{\Psi_4}\left(\mbf{x}\right) = \mbf{x}^T B \mbf{x}; B = I_{4 \times 4}$.
		Then we can find $W$ such that $W^T A W = B$:
		\begin{equation}
			W = w \left[\begin{matrix}i & \frac{\sqrt{3} i}{3} & \frac{\sqrt{6} i}{6} & \frac{\sqrt{6}}{6}\\- i & \frac{\sqrt{3} i}{3} & \frac{\sqrt{6} i}{6} & \frac{\sqrt{6}}{6}\\0 & - \frac{2 \sqrt{3} i}{3} & \frac{\sqrt{6} i}{6} & \frac{\sqrt{6}}{6}\\0 & 0 & - \frac{\sqrt{6} i}{2} & \frac{\sqrt{6}}{6}\end{matrix}\right]
		\end{equation}
		The normalization factor $w \approx 1.1$ maximizes the success probability according to \autoref{fig:probability}.
		Finally, we can prepare the state corresponding (after normalization) to the stellar polynomial $L_{\Psi_4}^\prime$ and condition on detecting $4 - 2 = 2$ photons on the anciliary mode.
		\begin{align}
			L_{\Psi_4}^\prime &= \prod_{k=1}^{4}\left(\lambda - \mbf{w}_k^T \mbf{x}\right)\\
			&= \sum_{k=0}^4 \lambda^{4 - k} e_k\left( W\mbf{x} \right)
		\end{align}
		
		The second method requires less catalyzing photons but achieves slightly lower probability of success.
	\end{example}

	\section{Numerical methods}
	\label{sec:numerical-methods}
	
	\begin{table}[ht]
		\centering
		\begin{alignat*}{2}
			&\ket{\Psi}_1 &&\propto \ket{200} + \ket{020} + \ket{002} \\
			&\ket{\Psi}_2 &&\propto \ket{300} + \ket{030} + \ket{003} \\
			&\ket{\Psi}_3 &&\propto \ket{400} + \ket{040} + \ket{004} \\
			&\ket{\Psi}_4 &&\propto \ket{2000} + \ket{0200} + \ket{0020} + \ket{0002} \\
			&\ket{\Psi}_5 &&\propto \ket{012} + \ket{120} + \ket{201} + \ket{021} + \ket{102} + \ket{210} \\
			&\ket{\Psi}_6 &&\propto \ket{110} + \ket{101} + \ket{011} \\
			&\ket{\Psi}_7 &&\propto \ket{220} + \ket{202} + \ket{022} \\
			&\ket{\Psi}_8 &&\propto \ket{2000} + \ket{0110} + \ket{0002} \\
			&\ket{\Psi}_9 &&\propto \ket{3000} + \ket{0210} + \ket{0120} + \ket{0003} \\
			&\ket{\Psi}_{10} &&\propto \ket{040} + \ket{121} + \ket{202}\\
			&\ket{R_2} &&\propto \ket{300} + \sqrt{3}\ket{120} + \sqrt{6}\ket{111} + \sqrt{3}\ket{102}\\
			&\ket{R_4} &&\propto \ket{300} + \ket{030} + \ket{003} + \ket{111} \\
			&\ket{R_5} &&\propto \ket{210} + \ket{021} \\
			&\ket{K_3} &&\propto \ket{3000} + \ket{2100} + \ket{2010} + \ket{2001} \\
			& && - \ket{1110} - \ket{1101} - \ket{1011} - \ket{0111}
		\end{alignat*}
		\caption{States used for evaluation}
		\label{tbl:selection-of-states}
	\end{table}
	
	While the first general method presented in \autoref{th:main-theorem} allows the preparation of the desired state given its Waring decomposition, practically computing this decomposition can be challenging. In particular, it is known that both determining the rank of a symmetric tensor and computing its best low-rank approximation are NP-hard problems \cite{TensorsAreNPHard}. It is also worth noting that the decomposition is not unique, except for some special cases \cite{GaluppiMellaIdentifiability}. Additionally, the number of catalysis photons predicted by \autoref{th:main-theorem} is larger than by \autoref{th:elementary-symmetric-decomposition}, therefore the latter should be used in practice. Nevertheless, for the sake of completeness, we present the implementation of both methods.
	
	The numerous applications of tensor decomposition across different fields have given rise to various decomposition algorithms. Purely algebraic methods, based on properties of the catalecticant matrix \cite{Bernardi_2018, Oeding_2013} can provide exact solutions when the decomposition rank is low, and are thus not generally applicable. A variety of numerical algorithms have also been proposed \cite{kileel2025subspacepowermethodsymmetric, lowrankapprox}, covering different performance trade-off, use cases or implementation platforms. Apart from the case $d=2$ presented in \autoref{sec:quadratic}, algebraic solutions to the ESP decomposition problem can also be obtained by expanding the expression $P(\mbf{x}) = e_d(U\mbf{x})$, yielding a system of polynomial equations in the coefficients of $U$, which can be solved using Gröbner bases methods (for example using SymPy's \texttt{solve\_poly\_system} or the \texttt{msolve} \cite{msolve} software package). The growing size of the bases makes this exact method applicable only for small values of $M$ and $N$.
	
	In this study, the decompositions were thus performed numerically, using the Optax optimization library \cite{deepmind2020jax} and its implementation of the Adam gradient descent algorithm. Starting from a target state, the optimization process is as follows:

    \paragraph{Waring decomposition.} First, the state is converted into the corresponding polynomial $P_{\Psi}$ in creation operators. This polynomial is represented as an order-$d$ symmetric tensor $T$ with $M^d$ complex entries. Assuming the rank $r$ is given, one can find by gradient descent the $r \times M$ matrix $W$ achieving the best approximation of $T$, as measured by $||T - \sum_{k=1}^r \mbf{w}_k^{\otimes d}||_2$. Since the rank $r$ is not known, we proceed iteratively, increasing the value of $r$ until a near-perfect reconstruction of $T$ is obtained. Then, once the rank and decompositions are found, the seed state $L_{r, d, \alpha W}(\mbf{\hat{a}}^\dagger) \ket{\mbf{0}}$ is considered, normalized, and its projection on $\ketbra{d(r-1)}{d(r-1)}_0$ determines the probability of success of the preparation scheme, i.e. the probability of having measured $d(r-1)$ photons on the anciliary mode. This probability depends on both $\alpha$ and the decomposition $W$. Given a decomposition $W$, the value of $\alpha$ maximizing the success probability is found by numerical optimization.

    \paragraph{ESP decomposition.} If the number of catalysis photons $c = N - d$ is fixed, the optimization process can directly search for the $N \times M$ matrix $U$ such that the state whose stellar polynomial is $e_d(U \mbf{x})$ has maximum fidelity with the target state $\ket{\Psi}$ after projection on $\ketbra{c}{c}_0$. All these operations~--~expansion of $e_d$, scaling of its coefficients, and projection~--~are differentiable and thus allow direct optimization by Optax. Because the number of catalysis photons needed to obtain perfect fidelity is not known, $N$ is iteratively increased until a fidelity of 1 is reached. In a second step, the probability of success is maximized by optimizing the scaling factor of $U$. Note that any function of the fidelity and probability of success, expressing a trade-off between these two quantities, could be used as well in the optimization process. We emphasize that in case the available resources are capped, the optimization can be performed with a value of $N$ that corresponds to the target number of photon additions. In particular, by choosing $N = d + 1$, the first stage of the optimization process (fidelity maximization) exactly recovers the results of \cite{Kopylov:2025zcs}, as a special instance.
    
    In order to estimate how the probability of success can vary with the decomposition $W$ or $U$, the optimization process is performed $K=25$ times with randomly seeded initial conditions.
	
	For the purpose of comparison, we also implemented the scheme described in appendix C of \cite{Kopylov:2025zcs}, which requires $M + 1$ modes, $d + 1$ photon additions, and conditioning on detecting exactly one photon on the anciliary mode. We verified that the fidelity metric found by our re-implementation matches the values reported by the original authors, and we additionally computed its probability of success. 
	
	The states used for evaluation are presented in \autoref{tbl:selection-of-states}. The states $\ket{\Psi}_1$ to $\ket{\Psi}_7$ are the examples studied in \cite{Kopylov:2025zcs}. $\ket{\Psi}_8$, $\ket{\Psi}_9$ and $\ket{\Psi}_{10}$ involve a higher number of modes or photons and pose an additional challenge to preparation. States $\ket{R_2}$, $\ket{R_4}$ and $\ket{R_5}$ are such that the corresponding tensor is respectively of subgeneric ($r = 2$), generic ($r = 4$), and maximal ($r = 5$) rank for the $M = 3, d = 3$ case, and illustrate how resource use scales with Waring rank. Finally, $\ket{K_3}$ corresponds to a fully connected graph state on 3 vertices. 

    \begin{table*}
		\centering
        \renewcommand{\aboverulesep}{0ex}
        \renewcommand{\belowrulesep}{0ex}
		\renewcommand{\arraystretch}{1.2}
		
		\begin{tabular}{l c c | c c c c c c | c c c c c c | c c c c c c}
			\toprule
			\multicolumn{3}{c|}{} 
			& \multicolumn{6}{c|}{Protocol of \cite{Kopylov:2025zcs}} 
			& \multicolumn{6}{c|}{Waring decomposition} 
			& \multicolumn{6}{c}{ESP decomposition} \\
			\midrule
			State            & d & M &  \#Add & PNR & $p_{min}$ & $p_{med}$ & $p_{max}$ & F 
			& \#Add & PNR & $p_{min}$ & $p_{med}$ & $p_{max}$ & F 
			& \#Add & PNR & $p_{min}$ & $p_{med}$ & $p_{max}$ & F \\
			\midrule
			$\ket\Psi_1$ & 2 & 3 & 3 & 1 & 0.04 & 0.22 & 0.28 & 1.00 & 6 & 4 & 0.01 & 0.08 & 0.23 & 1.00 & 3 & 1 & 0.11 & 0.25 & 0.29 & 1.00 \\
			$\ket\Psi_2$ & 3 & 3 & 4 & 1 & 0.05 & 0.14 & 0.26 & 1.00 & 9 & 6 & 0.17 & 0.17 & 0.17 & 1.00 & 4 & 1 & 0.09 & 0.18 & 0.27 & 1.00 \\
			$\ket\Psi_3$ & 4 & 3 & 5 & 1 & 0.03 & 0.15 & 0.38 & 0.95 & 12 & 8 & 0.09 & 0.09 & 0.09 & 1.00 & 6 & 2 & 0.04 & 0.11 & 0.19 & 1.00 \\
			$\ket\Psi_4$ & 2 & 4 & 3 & 1 & 0.08 & 0.22 & 0.28 & 0.75 & 8 & 6 & 0.03 & 0.11 & 0.28 & 1.00 & 4 & 2 & 0.13 & 0.17 & 0.21 & 1.00 \\
			$\ket\Psi_5$ & 3 & 3 & 4 & 1 & 0.03 & 0.15 & 0.29 & 1.00 & 9 & 6 & 0.20 & 0.20 & 0.20 & 1.00 & 4 & 1 & 0.05 & 0.17 & 0.29 & 1.00 \\
			$\ket\Psi_6$ & 2 & 3 & 3 & 1 & 0.08 & 0.20 & 0.31 & 1.00 & 6 & 4 & 0.02 & 0.11 & 0.25 & 1.00 & 3 & 1 & 0.16 & 0.25 & 0.32 & 1.00 \\
			$\ket\Psi_7$ & 4 & 3 & 5 & 1 & 0.01 & 0.12 & 0.29 & 1.00 & 24 & 20 & 0.00 & 0.01 & 0.02 & 1.00 & 5 & 1 & 0.01 & 0.15 & 0.30 & 1.00 \\
			$\ket\Psi_8$ & 2 & 4 & 3 & 1 & 0.08 & 0.22 & 0.29 & 0.83 & 8 & 6 & 0.02 & 0.12 & 0.26 & 1.00 & 4 & 2 & 0.12 & 0.18 & 0.22 & 1.00 \\
			$\ket\Psi_9$ & 3 & 4 & 4 & 1 & 0.07 & 0.17 & 0.28 & 0.88 & 12 & 9 & 0.15 & 0.15 & 0.15 & 1.00 & 5 & 2 & 0.02 & 0.18 & 0.18 & 1.00 \\
			$\ket\Psi_{10}$ & 4 & 3 & 5 & 1 & 0.00 & 0.11 & 0.32 & 0.98 & 24 & 20 & 0.00 & 0.01 & 0.02 & 1.00 & 6 & 2 & 0.00 & 0.06 & 0.14 & 1.00 \\
			$\ket{R_4}$ & 3 & 3 & 4 & 1 & 0.04 & 0.19 & 0.27 & 1.00 & 12 & 9 & 0.00 & 0.03 & 0.10 & 1.00 & 4 & 1 & 0.08 & 0.24 & 0.30 & 1.00 \\
			$\ket{R_5}$ & 3 & 3 & 4 & 1 & 0.01 & 0.13 & 0.31 & 1.00 & 15 & 12 & 0.01 & 0.04 & 0.08 & 1.00 & 4 & 1 & 0.01 & 0.16 & 0.31 & 1.00 \\
			$\ket{R_2}$ & 3 & 3 & 4 & 1 & 0.07 & 0.34 & 0.71 & 1.00 & 6 & 3 & 0.13 & 0.13 & 0.13 & 1.00 & 4 & 1 & 0.00 & 0.40 & 0.82 & 1.00 \\
			$\ket{K_3}$ & 3 & 4 & 4 & 1 & 0.02 & 0.25 & 0.29 & 0.95 & 15 & 12 & 0.13 & 0.13 & 0.13 & 1.00 & 5 & 2 & 0.00 & 0.01 & 0.18 & 1.00 \\
			\bottomrule& 
		\end{tabular}	
		\caption{Resources (number of photon additions, and photon number to be resolved), minimum, median and maximum probability of success, and fidelity of the scheme presented in \cite{Kopylov:2025zcs} and the two proposed state preparation methods, for a selection of states. All methods require one anciliary mode.}
		\label{tbl:results}
	\end{table*}
	
	The results are detailed in \autoref{tbl:results}. Since the state preparation methods we propose are exact, the fidelity is always 1, with minor deviations attributable to numerical noise.
    While guaranteeing the preparation of an arbitrary state with $100\%$ fidelity, the Waring decomposition seems to always require more catalyzing photons with respect to the optimal ESP decomposition. The scheme in \cite{Kopylov:2025zcs} can attain perfect fidelity for a low number of modes and photons, but its fidelity already degrades when a fourth mode is introduced. In all the cases, the optimization process can converge to multiple solutions corresponding to the same fidelity for each method, yielding different success probabilities. A practical implementation should thus select the best of several candidate solutions, or include the success probability as an additional term in the function being optimized.
	
	We also compared the scheme in \cite{Kopylov:2025zcs} with the method tailored for quadratic polynomials presented in \autoref{sec:quadratic}, for an increasing number of modes $M$. For each value of $M$, we considered 100 two-photon states with coefficients sampled randomly from a normal distribution. The results are given in \autoref{tbl:random-results} and confirm the optimality of the preparation scheme presented in \autoref{th:e2-corollary}. The method we propose requires a single photon projection for the case of three modes, but then generalizes to two photon states in any number of modes with linear scaling of the required target of PNR.
	
	The software implementation of the described methods is available on GitHub \footnote{\url{https://github.com/EQ15T/photon-catalysis}}. It includes an illustrative example that converts an ESP decomposition into the Boson sampling scheme of \autoref{fig:main-circuit-gaussian-boson}, which can be subsequently simulated by the discrete-variable photonic simulator Perceval \cite{heurtel2023perceval}. The decomposition leading to the highest probability of success among 5 candidates is chosen, converted into a sequence of unitaries with the algorithms described in \autoref{App:VectorToUnitary}, and each ideal photon addition block is replaced by a low-reflectivity beam-splitter injecting a photon from an ideal single-photon source. As shown in \autoref{sec:asymptotic-photon-addition}, the reflectivity parameter can be tuned to balance fidelity and success probability. This tradeoff is illustrated in \autoref{fig:boson-sampling-tradeoff}, with ten distinct beam-splitter ratios logarithmically spaced between 95:5 and 50:50. For the simpler states, a fidelity above 99\% can be obtained with $p \geq 10^{-3}$, but for $\ket{\Psi}_{10}$, which has 4 modes and requires two catalysis photons, the probability of success drops below $10^{-6}$ to reach this fidelity target. This additional catalysis step incurs a cost in probability of success, yet it is an unavoidable cost: restricting the number of catalysis photons to one bounds the fidelity at 97\% even with the most accurate (and thus unlikely to succeed) photon addition.

    \begin{figure}[t]
    \def\mathdefault#1{#1} 
    \resizebox{0.5\textwidth}{!}{\input{"figures/selected_states.pgf"}}
    \caption{Probability of success and accuracy (measured by $1 - F$, equivalently the square of the trace distance) of the boson sampling implementation, using a non-ideal photon addition with a varying beam-splitter reflectivity. For $\ket{\Psi}_{10}$, optimal circuits with two catalysis photons (red) and only one (purple) are both considered.}
    \label{fig:boson-sampling-tradeoff}
    \end{figure}
    
	\begin{table*}
		\centering
		\aboverulesep=0ex
		\belowrulesep=0ex
		\renewcommand{\arraystretch}{1.2}
		\begin{tabular}{c c | c c c c c | c c c c c | c c c}
			\toprule
			\multicolumn{2}{c|}{} & \multicolumn{5}{c|}{Baseline} & 
			\multicolumn{5}{c|}{ESP decomposition, $N=4$} & 
			\multicolumn{3}{c}{ESP decomposition} \\
			\midrule
			d & M & \#Add. & PNR & F min & F avg & F max & \#Add. & PNR & F min & F avg & F max & \#Add. & PNR & F \\
			\midrule
			2 & 3 & 3 & 1 &    1 &    1 &    1 & 4 & 2 &   1 &   1 &    1 & 3 & 1 & 1 \\
			2 & 4 & 3 & 1 & 0.88 & 0.97 &    1 & 4 & 2 &   1 &   1 &    1 & 4 & 2 & 1 \\
			2 & 5 & 3 & 1 & 0.84 & 0.93 &    1 & 4 & 2 & 0.94&0.99 &    1 & 5 & 3 & 1 \\
			2 & 6 & 3 & 1 & 0.77 & 0.87 & 0.96 & 4 & 2 & 0.89&0.96 &    1 & 6 & 4 & 1 \\
			\bottomrule
		\end{tabular}
		\caption{Resources (number of photon additions, and photon number to be resolved) and fidelity of the proposed state preparation method with a fixed number of photon additions ($N = 4$), or with the number of photon additions guaranteeing perfect fidelity ($N = M$), compared to that of \cite{Kopylov:2025zcs}, for 2-photon states in $M$ modes with coefficients sampled randomly from a normal distribution.}
		\label{tbl:random-results}
	\end{table*}

	\section{Conclusions and outlook}


    In this work, we have proposed two methods to prepare an arbitrary multimode multi-photon state using a fixed set of operations that can be realistically implemented in a quantum-optical experiment: multi-port interferometers, photon additions, photon subtractions and vacuum projections, or equivalently post-selecting on the outcome of a PNR detection. Both methods are formulated as polynomial decompositions, the first being the well-studied Waring form, whereas the second is a novel decomposition. Results from algebraic geometry provide bounds on the resources used by the former and prove the existence of the latter, which is optimal. For the specific case of states having two photons in $M$ modes, we provide an explicit solution to this decomposition problem, which can in the most general case be solved by numerical optimization techniques. The proposed methods lead to circuits that can be built from bulk optical elements, but also formulated as instances of (Gaussian) boson sampling problems, compatible with the architecture of leading integrated photonic computing platforms \cite{larsen_integrated_2025}, \cite{Quandela24}. 

    We conjecture that a generalization of \autoref{th:main-theorem}, allowing to put unitary transformations between photon subtractions as in \autoref{ex:GHZ}, could help achieve a higher probability of success with less resources. In addition, further work is needed to study the algebraic properties of the Elementary Symmetric Polynomial decomposition of \autoref{th:elementary-symmetric-decomposition}, first to find suitable decomposition algorithms that do not rely on numerical optimization, but more importantly to find bounds on its rank as a function of both $d$ and $M$. While the Alexander-Hirschowitz theorem shows a rapid explosion of the generic rank as a function of these two parameters for the Waring decomposition, it is unclear how the resources scale for the more parsimonious ESP decomposition. The ``photon by photon" method we propose might not be the most efficient to engineer states with a large number of modes and photons, for which it could be more efficient to sculpt larger resource states by non-Gaussian operations. However, considering that states within experimental reach have only a limited number of modes and photons, we are far away from the practical limitation of our method. The performance vs fidelity tradeoff highlighted in \autoref{sec:numerical-methods} could be further refined by taking into account the optical losses or imperfect gate fidelities of existing photonic quantum computing platforms. Their ability to herald PNR detection patterns for more than one single photon makes them increasingly well-suited to our scheme. Performances might also be evaluated on mixed states under the presence of different types of noise. In parallel to these experimental concerns, the relation between the rank of these decomposition and the physical properties of entanglement, particularly whether it is Gaussian or not, is left to future research.

	Finally, we emphasize that, in a continuous-variable setting, any finite-stellar-rank state can be obtained by performing a well-chosen Gaussian operation on a core state, which is exactly a multimode multiphoton state. This implies that our methods are straightforwardly extended to the generation of any state of finite stellar rank by adding a Gaussian unitary. While, a priori, this might be impractical from an experimental point of view, adding such a Gaussian unitary transformation to the implementations of \autoref{sec:photonic-impl} will give rise to a more feasible experimental design for continuous-variable quantum information processing \cite{PhysRevResearch.3.033018,PhysRevLett.130.090602}.
	
	\begin{acknowledgements}   
		We acknowledge fruitful discussions with Valentina Parigi, Carlos E. Lopetegui-Gonz\'alez, and Polina Kovalenko. A.A. and E.G. are   supported   by   the   European   Research Council under the Consolidator Grant COQCOoN (Grant No.  820079). M.W. and M.F. acknowledge financial support from the ANR JCJC project NoRdiC (ANR-21-CE47-0005). M.W. received financial support from Plan France 2030 through the project OQuLus (ANR-22-PETQ-0013).
	\end{acknowledgements}

	\newpage
	\appendix
	
	\section{Alexander-Hirschowitz theorem}
	\begin{theorem}[Alexander-Hirschowitz \cite{Landsberg2012_AH}]
		\label{th:alexander-hirshowitz}
		\
		
		Let $\mathcal{Z}_r\left(M, d\right) = \left\{\mathcal{P} \in S^d\left(\mathbb{C}^M\right) \mid \rank_S\mathcal{P} = r \right\}$. $r$ is called generic symmetric rank if it is the minimal rank such that $\mathcal{Z}_r\left(M, d\right)$ is dense in $S^d\left(\mathbb{C}^M\right)$ with respect to Zariski topology on $S^d\left(\mathbb{C}^M\right)$ viewed as a vector space over $\mathbb{C}$, i.e.
		\begin{equation}
			\bar{\mathcal{Z}}_r\left(M, d\right) = S^d\left(\mathbb{C}^M\right)
		\end{equation}
		
		For $d > 2$, the generic symmetric rank of $S^d\left(\mathbb{C}^M\right)$ is
		\begin{equation}
			\bar{r}_{\mathrm{gen}}\left(M, d\right) = \left\lceil \frac{1}{M} \binom{M + d - 1}{d} \right\rceil
		\end{equation}
		except for the cases $\left(M, d\right) \in \left\{ \left(3, 5\right), \left(4, 3\right), \left(4, 4\right), \left(4, 5\right) \right\}$, where it should be increased by 1.
	\end{theorem}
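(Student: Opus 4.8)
The plan is to recast the statement as a non\mbox{-}defectivity result for secant varieties of Veronese varieties, reduce it to a polynomial interpolation problem by Terracini's lemma, and then carry out the Alexander--Hirschowitz induction on $M$ and $d$ via the differential Horace method. First I would reformulate: through the standard correspondence between degree-$d$ forms and symmetric tensors, the Waring rank of $\mathcal{P}_\psi$ is the least $r$ such that $[\mathcal{P}_\psi]$ lies in the span of $v_d([\ell_1]),\dots,v_d([\ell_r])$, where $v_d\colon \mathbb{P}^{M-1}\hookrightarrow \mathbb{P}^{N-1}$, $N=\binom{M+d-1}{d}$, is the $d$-th Veronese embedding. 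Hence $\bar r_{\mathrm{gen}}(M,d)$ is the smallest $r$ with $\sigma_r\bigl(v_d(\mathbb{P}^{M-1})\bigr)=\mathbb{P}^{N-1}$. Since $\dim\sigma_r\le\min(rM-1,\,N-1)$ unconditionally, the claimed value $\lceil N/M\rceil$ is exactly the expected threshold, so the theorem is equivalent to: $\sigma_r\bigl(v_d(\mathbb{P}^{M-1})\bigr)$ is non-defective for every $r$, with the sole exceptions $(M,d)\in\{(3,5),(4,3),(4,4),(4,5)\}$ (and the classically known case $d=2$, which is why the statement assumes $d>2$), where the dimension drops by exactly one.

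Next I would apply Terracini's lemma: the projective tangent space to $\sigma_r$ at a generic point is the span of the embedded tangent spaces to the Veronese at $r$ generic points $p_1,\dots,p_r\in\mathbb{P}^{M-1}$, and by apolarity its codimension in $\mathbb{P}^{N-1}$ equals $h^0\bigl(\mathcal{I}_Z(d)\bigr)$, where $Z=2p_1+\dots+2p_r$ is the generic union of $r$ double (fat) points. Each double point has length $M$, so the whole theorem reduces to the interpolation statement that $r$ generic double points of $\mathbb{P}^{M-1}$ impose independent conditions on degree-$d$ forms, i.e. $h^0(\mathcal{I}_Z(d))=\max\bigl(0,\,N-rM\bigr)$, with the four exceptional pairs.

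The core of the proof is then the induction on $M$ and $d$ by the Horace method. One fixes a hyperplane $H\cong\mathbb{P}^{M-2}$, specializes a chosen number of the points to lie on $H$, and uses the restriction sequence tying $\mathcal{I}_Z(d)$ on $\mathbb{P}^{M-1}$ to the trace of $Z$ on $H$ (a double-point scheme in $\mathbb{P}^{M-2}$, in degree $d$) and the residual scheme $\mathrm{Res}_H Z$ (in degree $d-1$ on $\mathbb{P}^{M-1}$). A naive degeneration fails because a double point sitting on $H$ does not split cleanly — its trace has length $M-1$ and its residual length $1$, so the bookkeeping of conditions goes off by the wrong amount. The essential device is the differential Horace lemma of Alexander--Hirschowitz, in which some double points are replaced by carefully chosen partial schemes (a point together with a tangent direction, and so on) so that the lengths of trace and residual match the expected counts in degrees $d$ and $d-1$ respectively, and the inductive hypothesis can be invoked on both pieces. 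Playing $M$ and $d$ against one another, one peels every case down to base cases in small degree or few variables.

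Finally I would dispose of the base cases and pin down the exceptions directly: for $d=2$ the generic rank is $M$ (a generic quadratic form has rank $M$, by Autonne--Takagi), which is excluded; for $d=3,4,5$ with $M\le 4$ one checks by hand — classically, or via an explicit rank computation of catalecticant matrices — that the interpolation fails by exactly one, yielding the four exceptional pairs, each reflecting a known special geometry (for $(M,d)=(4,3)$, the unique quadric through the points; for $(3,5)$, plane quintics double at nine points; etc.). I expect the main obstacle to be making the differential Horace induction airtight: controlling the flat limit of the linear system as points collide with $H$, and verifying that the trace and residual schemes produced are themselves generic enough to feed back into the inductive hypothesis. That delicate specialization analysis, together with the finite but genuinely nontrivial verification of the base and exceptional cases, is the real content; the two reduction steps preceding it are essentially formal.
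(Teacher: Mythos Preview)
Your sketch is a faithful and technically accurate outline of the actual proof of the Alexander--Hirschowitz theorem as it appears in the algebraic-geometry literature: the reformulation via secant varieties of the Veronese, the reduction through Terracini's lemma to an interpolation problem for double points, and the differential Horace induction are precisely the ingredients, and you correctly identify where the genuine difficulty lies (the specialization analysis and the exceptional-case verifications).

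However, the paper does not prove this theorem at all. It is stated in the appendix with a citation to Landsberg's book and is used as an imported black-box result to bound the generic Waring rank. There is no proof in the paper to compare against; the theorem functions purely as background from the literature. So while your proposal is a correct plan for proving Alexander--Hirschowitz, it goes far beyond anything the paper attempts --- the paper's ``proof'' is simply the reference \cite{Landsberg2012_AH}.
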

	
	The geometric and physical interpretations of the generic symmetric rank follow from the following statement about Zariski topology (see \cite[Proposition~4.9.5.1]{Landsberg2012_zariski})
	\begin{lemma}
    \label{th:measure-zero-lemma}
		Any Zariski closed proper subset of a projective space $\mathbb{P}V$ has measure zero with respect to any measure on $\mathbb{P}V$ compatible with its linear structure.
		
		In particular, a tensor not of typical rank has probability zero of being selected at random.
	\end{lemma}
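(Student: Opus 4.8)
The statement has two layers: a general fact about Zariski-closed subsets of a projective space, and its specialization to symmetric tensors. The plan is to prove the general fact by transporting the problem into an affine chart, reducing there to the claim that the zero locus of a single nonzero polynomial is Lebesgue-null, and establishing the latter by induction on the number of variables. Concretely, I read ``a measure on $\mathbb{P}V$ compatible with its linear structure'' as a Borel measure which, expressed in each of the standard affine charts $U_i=\{[v]\in\mathbb{P}V:v_i\neq 0\}\cong\mathbb{C}^{n}$ with $n=\dim_{\mathbb{C}}V-1$, is absolutely continuous with respect to Lebesgue measure on $\mathbb{C}^n\cong\mathbb{R}^{2n}$; the Fubini--Study measure, or any smooth positive density, qualifies, and since the transition maps between charts are real-analytic diffeomorphisms this class is chart-independent. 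As the $U_i$ cover $\mathbb{P}V$, it then suffices to show that a proper Zariski-closed $Z\subsetneq\mathbb{P}V$ meets each $U_i$ in a Lebesgue-null subset of $\mathbb{C}^n$. But $Z\cap U_i$ is Zariski-closed in $\mathbb{C}^n$ and proper --- otherwise $Z$ would contain the dense set $U_i$ and hence equal $\mathbb{P}V$ --- so $Z\cap U_i\subseteq V(f):=\{z:f(z)=0\}$ for some nonzero $f\in\mathbb{C}[z_1,\dots,z_n]$.

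Next I would prove, by induction on $n$, that $V(f)$ is Lebesgue-null in $\mathbb{C}^n$ for every nonzero $f$. For $n=1$ a nonzero polynomial has finitely many roots. For $n>1$: if $f$ does not involve $z_n$, apply the inductive hypothesis to $f\in\mathbb{C}[z_1,\dots,z_{n-1}]$ and use that $V(f)$ is a product with $\mathbb{C}$; otherwise let $g\in\mathbb{C}[z_1,\dots,z_{n-1}]$ be the (nonzero) leading coefficient of $f$ as a polynomial in $z_n$. For $w$ with $g(w)\neq 0$ the slice $f(w,\cdot)$ is a nonzero one-variable polynomial, so its zero set in $\mathbb{C}$ is finite, hence null; by induction $\{g=0\}$ is null in $\mathbb{C}^{n-1}$; and by Fubini's theorem $V(f)\subseteq(\{g=0\}\times\mathbb{C})\cup\{(w,z_n):g(w)\neq 0,\ f(w,z_n)=0\}$ is null in $\mathbb{C}^n$. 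Combined with the reduction above, this proves the first assertion.

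For the specialization, take $V=S^d(\mathbb{C}^M)$. The set of symmetric tensors of rank $\leq r$ is the image of the regular map $(\mbf{w}_1,\dots,\mbf{w}_r)\mapsto\sum_k\mbf{w}_k^{\otimes d}$, hence constructible by Chevalley's theorem, and $\rank_S$ takes only finitely many values; since $S^d(\mathbb{C}^M)$ is irreducible, one of the constructible sets $\{\rank_S=k\}$ is dense and therefore contains a nonempty Zariski-open subset $U$, on which $\rank_S$ is the constant $\bar{r}_{\mathrm{gen}}(M,d)$ by the very definition of the generic rank. The complement $S^d(\mathbb{C}^M)\setminus U$ is then a proper Zariski-closed set containing every tensor of non-generic rank, so --- after projectivizing, which is harmless since the rank is scale-invariant --- the first part shows it has measure zero. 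Hence a tensor drawn at random has the generic rank with probability one.

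The one genuinely delicate point is the first reduction: pinning down the meaning of ``compatible with the linear structure'' and checking that Lebesgue-null sets in the affine charts pull back to null sets of $\mathbb{P}V$, so that the affine computation really controls the given measure. The Fubini-type induction and the facts about secant varieties and constructible images are standard and should go through without difficulty.
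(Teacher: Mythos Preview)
Your proposal is correct and in fact goes well beyond what the paper does: the paper does not prove this lemma at all, but simply quotes it as Proposition~4.9.5.1 of Landsberg's book and moves on. The short paragraph following the lemma in the paper is not a proof but a remark on consequences (a measure-zero set has empty interior, hence any point of it is a limit of points outside). So there is no ``paper's own proof'' to compare against; you have supplied a self-contained argument where the authors were content to cite.

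On the mathematics: your reduction to affine charts, the Fubini-style induction showing $V(f)$ is Lebesgue-null, and the Chevalley/constructibility argument for the rank stratification are all standard and sound. Your caveat about the phrase ``compatible with the linear structure'' is well placed --- the paper never defines it either, and your reading (absolute continuity with respect to Lebesgue in each affine chart, equivalently a smooth positive density such as Fubini--Study) is the natural one and the only one under which the statement is true. One tiny cosmetic point: in the specialization you could note directly that the locus of tensors of border rank $\leq r$ (the $r$-th secant variety of the Veronese) is Zariski-closed, which slightly shortcuts the constructibility discussion; but what you wrote is fine as is.
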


	Finally note that a proper set $\mathcal{S} \subset V$ of measure zero, according to the measure induced by the Lebsegue one on projective spaces, has no interior points so each point $p \in \mathcal{S}$ from such set lies on its border. It follows that any neighborhood $U \ni p$ contains points outside of the set: $U \cap \left( V \setminus \mathcal{S} \right) \neq \varnothing$, thus it is possible to construct a sequence of points $q_i \in V \setminus \mathcal{S}$ converging to $p$.

	
	\section{Proofs}
	\begin{proof}[Proof of \autoref{th:monic-and-waring}]
		The equality follows from the fact that any homogeneous polynomial in two variables is factorizable into a product of linear terms, and particularly
		\begin{equation}
			a^d + b^d  \ = \ \prod_{j = 0}^{d - 1}\left( a - \omega_{2d} \omega_d^j b \right)
		\end{equation}
		since $a = \omega_{2d} \omega_{d}^{j} b$ for $j=0,...,d-1$ are the $d$ roots of $a^d +b^d = 0$, with $\omega_{d} = e^{2 \pi i / d  }$ is the fundamental $d$-root of unity. 
		It follows that every factor in $L_{r, d, W}$ can be factorized in such way, thus $L_{r, d, W} \in \left[1^N\right]_M$.
	\end{proof}
	
	\begin{proof}[Proof of \autoref{th:dehomogeniztion}]
		Let $P_{\psi}\left(\mbf{x}\right) = \sum_{\mbf{n}} p_{\mbf{n}} \mbf{x}^{\mbf{n}}$ be the stellar polynomial of the desired non-homogeneous state. The homogenization is constructed by adding an ancillary variable to each monomial to complement its degree to the degree of the polynomial, i.e
		\begin{equation}
			P_{\psi^h}\left(\lambda, \mbf{x}\right) = \sum_{\mbf{n}} p_\mbf{n} \lambda^{d - \left|\mbf{n}\right|} \mbf{x}^\mbf{n}
		\end{equation}
		It is then clear that $P_{\psi^h}\left(1, \mbf{x}\right) = P_{\psi}\left(\mbf{x}\right)$.
		
		Now let's consider the action of displacement operator $ \hat{D}_0\left(\alpha\right)$ on the ancillary mode of $\ket{\psi^h}$.
		\begin{align}
			\hat{D}_0\left(\alpha\right) \ket{\psi^h} &= \hat{D}_0\left(\alpha\right) P_{\psi^h}\left(\adhat_0, \mbf{\adhat}\right) U^\dagger U\ket{0\mbf{0}} \\
			&= \sum_{\mbf{n}} p_\mbf{n} \hat{D}_0\left(\alpha\right) \left(\adhat_0\right)^{d - \left|\mbf{n}\right|} \left(\mbf{\adhat}\right)^\mbf{n} U^\dagger \ket{\alpha}\ket{\mbf{0}}\\
			&= \sum_{\mbf{n}} p_\mbf{n} \left(\adhat_0 + \alpha\right)^{d - \left|\mbf{n}\right|} \left(\mbf{\adhat}\right)^\mbf{n} \ket{\alpha}\ket{\mbf{0}}\\
			&= \sum_{\mbf{n}} p_\mbf{n} \alpha^{d - \left|\mbf{n}\right|} \left(\mbf{\adhat}\right)^\mbf{n} \ket{\alpha}\ket{\mbf{0}} + \adhat_0 \left(\dots\right)\ket{\alpha}\ket{\mbf{0}}
		\end{align}
		
		After the projection onto the vacuum, the terms that are multiplied by the creation operator on the ancillary mode vanish: $\bra{0}\adhat_0\ket{\phi} = 0$, thus after the substitution of $\alpha = 1$ we obtain $\ket{\psi}$
		\begin{align}
			\bra{0} \hat{D}_0\left(\alpha\right) \ket{\psi^h} &= \sum_{\mbf{n}} p_\mbf{n} \alpha^{d - \left|\mbf{n}\right|} \left(\mbf{\adhat}\right)^\mbf{n} \ket{\mbf{0}} \dbraket{0}{\alpha} \\
			&\propto \sum_{\mbf{n}} p_\mbf{n} \alpha^{d - \left|\mbf{n}\right|} \left(\mbf{\adhat}\right)^\mbf{n} \ket{\mbf{0}} \\
			&\stackrel{\alpha = 1}{=} \sum_{\mbf{n}} p_\mbf{n} \left(\mbf{\adhat}\right)^\mbf{n} \ket{\mbf{0}} = \ket{\psi}
		\end{align}
	\end{proof}
	
	\begin{proof}[Proof of \autoref{th:elementary-symmetric-decomposition}]
		The proof of the first statement proceeds by demonstrating that the Waring decomposition is a special case of the introduced elementary symmetric decomposition.
		Consider $L_{r,d,W}\left(\mbf{x}\right)$ and rewrite \autoref{eq:L-prod}
		\begin{align}
			L_{r, d, W} &= \sum_{k=0}^{r} \lambda^{d\left(r - k\right)}e_k\left( \left(\mbf{w}_1^T\mbf{x}\right)^d, \dots, \left(\mbf{w}_r^T\mbf{x}\right)^d \right) \label{eq:app-e1-d} \\
			&= \prod_{k=1}^r \prod_{i=0}^{d-1} \left( \lambda - \omega_{2d}\omega_d^i \mbf{w}_k^T\mbf{x} \right) \\
			&= \prod_{s=1}^{rd} \left(\lambda + \mbf{u}_s^T \mbf{x} \right) \\
			&= \sum_{s = 0}^{rd} \lambda^{rd - s} e_s\left(\mbf{u}_1^T \mbf{x}, \dots, \mbf{u}_{rd}^T \mbf{x}\right) \label{eq:app-es-1}
		\end{align}
		
		As the coefficients in front of different powers of $\lambda$ are unchanged, we obtain the following equality for $k=1$ in \autoref{eq:app-e1-d} and $s = d$ in \autoref{eq:app-es-1}
		\begin{multline}
			e_1\left(\left(\mbf{w}_1^T\mbf{x}\right)^d, \dots, \left(\mbf{w}_r^T\mbf{x}\right)^d\right) = \sum_{k = 1}^r\left(\mbf{w}_k^T \mbf{x}\right)^d \\
			= e_{d}\left(\mbf{u}_1^T \mbf{x}, \dots, \mbf{u}_{rd}^T \mbf{x}\right) = e_d\left(U \mbf{x}\right)
		\end{multline}
		
		Therefore, by choosing $W$ to define the Waring decomposition of the desired homogeneous polynomial $P$, we have constructed the desired $U$ with $N = rd$ rows, which proves the existence of the decomposition.
		
		The second statement is similar to the preparation with Waring decomposition. Suppose the homogenized stellar polynomial $P_{\psi^h}$ has the decomposition
		\begin{equation}
			P_{\psi^h}\left(\mbf{y}\right) = e_d\left(U\mbf{y}\right) = e_d\left(\mbf{u}_1^T \mbf{y}, \dots, \mbf{u}_N^T \mbf{y}\right)
		\end{equation}
		Then we prepare the intermediate state with ancillary mode denoted by $\lambda$ by alternating multiport interferometers and photon additions
		\begin{equation}
			L^\prime\left(\lambda, \mbf{y}\right) = \prod_{k=1}^N\left(\lambda  + \mbf{u}_k^T\mbf{y}\right) \in \left[1^N\right]_{M+1}
		\end{equation}
		Applying the monic expansion (see \autoref{eq:monic-expansion}) we write
		\begin{equation}
			L^\prime\left(\lambda, \mbf{y}\right) = \sum_{k=0}^N \lambda^{N-k}e_k\left(U\mbf{x}\right)
		\end{equation}
		Then, from the correspondence in \autoref{tbl:state-poly-corresp}, conditioning the ancillary mode $\ahat_\lambda$ on having $N-d$ photons corresponds to collapsing the sum onto the coefficient next to $\lambda^{N-d}$, which gives exactly the desired polynomial $P_{\psi^h}$.
		
		Finally, by \autoref{th:dehomogeniztion}, applying displacement and vacuum-projection on mode $\adhat_y$ to $P_{\psi^h}\left(\mbf{y}\right)$ produces the dehomogenized polynomial $P_\psi\left(\mbf{x}\right)$.
		
		The optimality follows from considering the most general form of the stellar polynomial corresponding to the state after a series of multiport interferometers and photon additions with an ancillary mode, that is the most general $L\left(\lambda, \mbf{x}\right) \in \left[1^N\right]_{M+1}$. We can always rewrite such $L$ grouping linear factors that contain the ancillary variable and that do not. Formally, denoting $\left\{\mbf{u}_k\right\}$ the set of $f_1$ vectors inducing the first group of linear forms and $\left\{\mbf{v}_k\right\}$ the set of $f_2$ vectors inducing the second group:
		\begin{multline}
			L = \prod_{k = 1}^N L_k = \prod_{\mathclap{\lambda \in \dom L_k}} L_k \;\;\;\; \prod_{\mathclap{\lambda \not\in \dom L_j}} L_j \\
			\propto \prod_{k = 1}^{f_1}\left( \lambda + \mbf{u}_k^T \mbf{x}\right) \prod_{k=1}^{f_2}\left(\mbf{v}_k^T\mbf{x}\right)
		\end{multline}
		Then, conditioning the ancillary mode on having $f_1-d$ photons gives a coefficient of the corresponding power of $\lambda$ in this expression, that is 
		\begin{equation}
			P\left(\mbf{x}\right) = e_d\left(U\mbf{x}\right) \prod_{k=1}^{f_2}\left(\mbf{v}_k^T\mbf{x}\right)
		\end{equation}
		Finally we note that for an irreducible target polynomial $f_2 = 0, f_1 = N$, as otherwise there are at least two factors in the expression of $P$. By taking the smallest $N$ such that the decomposition exists, we minimize the number of catalysis photons, concluding the optimality.
		
		
		
	\end{proof}
	
	\begin{proof}[Proof of \autoref{thm:photon-catalysis}]
		Suppose that $P\left(\mbf{x}\right)$ is a homogeneous polynomial having $M$ intrinsic modes. Any polynomial in $L \in \left[1^N\right]_{M+n}$ with $n\geq 0$ ancillary modes can have at most $N$ intrinsic modes, since we can at most attach one intrinsic mode to each linear form in $L$. 
		By conditioning on specific powers of the ancillary variables, we always obtain a linear combination of products of the linear forms appearing in $L$, so each polynomial in $M$ variables that we can obtain from the coefficients of the ancillary variables in $L$ cannot contain more than $N$ intrinsic modes. It thus follows that if the target polynomial has $M$ intrinsic modes, we must have $N \geq M$. Moreover, it is also clear that the degree $d$ of $P$ is a lower bound to the number of linear factors $N$ in $L$, if $L$ has to generate $P$. Thus $N \geq d$ and, altogether, we have $N \geq \max (d,M)$.  
		
	\end{proof}
	
	\section{Construction of the sequence of interferometers}\label{App:VectorToUnitary}
	Here we give the algorithm, first presented in \cite{Kopylov:2025zcs}, that could be used to obtain a sequence of unitaries $\left\{U_i\right\}$ given a sequence of linear forms in creation operators $\left\{\mbf{w}_i^T \mbf{\adhat}\right\}_{i = 1..N}$, such that alternating photon additions on the ancillary mode $\adhat_0$ with these unitaries produces the same state as $\prod_{i=1}^N \mbf{w}_i^T \mathbf{a}^\dagger$. In other words, this is the algorithm that allows to construct a circuit that prepares the seed state corresponding to the fully factorizable stellar polynomial $L_{r, d, W}$ (depicted as the blue part in \autoref{fig:main-circuit}). The \emph{UnitaryCompletion} sub-routine builds the unitary matrix that dispatches a newly added photon, to the mode superposition described by $w_i$. While any orthogonalization method can be used for this task, the proposed Algorithm 2 ensures that the unitary is as sparse as it needs to be, and acts trivially on the modes that are not in the support of $w_i$. We observed that decompositions obtained by algebraic methods often lead to sparse $\{w_i\}$, and sparse unitaries might lead to more efficient circuits. The further decomposition of each of these unitaries into a network of beam-splitters and phase-shifters can then be obtained, for instance, with Reck's or Clements' methods \cite{Reck, Clements}.
	
	
	
	\SetKw{KwFrom}{from}
	
	\begin{algorithm}[h]
		\caption{Product of linear forms to unitaries}\label{alg:backinversion}
		\For{$i$ \KwFrom 1 \KwTo $N$}{
			$U_{N - i - 1} \gets \text{UnitaryCompletion}(w_i)$\\
			\For{$j$ \KwFrom $i + 1$ \KwTo $N$}{
				$w_j \gets w_j U_{N - i + 1}^\dagger$
			}
		}
	\end{algorithm}	
	
	\begin{algorithm}[t]
		\caption{Unitary completion}\label{alg:completion}
		\KwIn{A vector $w \in \mathbb{C}^n$}
		\KwOut{A unitary matrix $U \in \mathbb{C}^{n \times n}$ such that $U^\dagger e_0 = w / ||w||$}
		$U \gets 0_{n \times n}$ \tcp{Initialize zero matrix}
		$S \gets \{i | w_i \neq 0\}$ \tcp{Support of $w$}
		$n_s \gets |S|$\\
		$V \gets \mathbb{I}_{n_s}$ \tcp{Identity matrix of size $n_s$}
		\For{$i$ \KwFrom $0$ \KwTo $n_s - 1$}{
			$v_{0, i} \gets w_{S[i]}$
		}
		$V \gets \text{Gram-Schmidt}(\text{V})$\\
		$C \gets S$ \tcp{Affected columns}
		$R \gets [0] \cup S[1:]$ \tcp{Affected rows}
		\For{$i$ \KwFrom $0$ \KwTo $n_s - 1$}{
			\For{$j$ \KwFrom $0$ \KwTo $n_s - 1$}{
				$u_{R[i], C[j]} \gets v_{i, j}$\\
			}
		}
		\For{$j$ \KwFrom $1$ \KwTo $n - 1$}{
			\If{$j \notin S$}{
				$u_{j, j} \gets 1$\\
			}
		}
		\If{$0 \notin C$}{
			$u_{S[0], 0} \gets 1$\\
		}
		\Return{$U$}
	\end{algorithm}

	\section{Asymptotic error of photon addition}
	\label{sec:asymptotic-photon-addition}
	\paragraph{Beam splitter.} Consider $\hat{U}_{BS} = e^{\theta\left( \adhat_1 \ahat_2 - \ahat_1 \adhat_2\right)}$.
	\begin{multline}
		\hat{U}_{\mathrm{BS}} = 1 + \theta \left( \adhat_1 \ahat_2 - \ahat_1 \adhat_2\right) \\+ \frac{\theta^2}{2}\left( \adhat_1 \ahat_2 - \ahat_1 \adhat_2\right)^2 + \mathcal{O}\left(\theta^3\right)
	\end{multline}
	\begin{align}
		\bra{0}_2 \hat{U}_{\mathrm{BS}} \ket{\psi}_1\ket{1}_2 = \sum_{n=0} \psi_n \bra{0}_2 \hat{U}_{\mathrm{BS}} \ket{n}_1\ket{1}_2
	\end{align}
	
	Notice that $\bra{0}_2 \left( \adhat_1 \ahat_2 - \ahat_1 \adhat_2\right)^2 \ket{n}_1\ket{1}_2 = 0$ so
	\begin{multline}
		\bra{0}_2 \hat{U}_{\mathrm{BS}} \ket{\psi}_1\ket{1}_2
		\\= \sum_{n=0} \psi_n \bra{0}_2 \left(\theta\left( \adhat_1 \ahat_2 - \ahat_1 \adhat_2\right) + \mathcal{O}\left(\theta^3\right)\right) \ket{n}_1\ket{1}_2
		\\= \theta \adhat_1 \ket{\psi} + \mathcal{O}\left(\theta^3\right) = \theta\left[\adhat_1 \ket{\psi} + \mathcal{O}\left(\theta^2\right)\right]
	\end{multline}
	
	\paragraph{Squeezer.} Consider $\hat{U}_{\mathrm{PDC}} = e^{\xi \left(\adhat_1 \adhat_2 - \ahat_1 \ahat_2\right)}$.
	\begin{multline}
		\hat{U}_{\mathrm{PDC}} = 1 + \xi \left(\adhat_1 \adhat_2 - \ahat_1 \ahat_2\right) \\+ \frac{\xi^2}{2}\left(\adhat_1 \adhat_2 - \ahat_1 \ahat_2\right)^2 + \mathcal{O}\left(\xi^3\right)
	\end{multline}
	\begin{equation}
		\bra{1}_2 \hat{U}_{\mathrm{PDC}} \ket{\psi}_1\ket{0}_2 = \sum_{n=0}\psi_n \bra{1}_2 \hat{U}_{\mathrm{PDC}} \ket{n}_1\ket{0}_2
	\end{equation}
	
	Since $\bra{1}_2 \left(\adhat_1 \adhat_2 - \ahat_1 \ahat_2\right)^2 \ket{n}_1\ket{0}_2 = 0$ as well
	\begin{multline}
		\bra{1}_2 \hat{U}_{\mathrm{PDC}} \ket{\psi}_1\ket{0}_2
		\\= \sum_{n=0} \psi_n \bra{1}_2 \left(\xi \left(\adhat_1 \adhat_2 - \ahat_1 \ahat_2\right) + \mathcal{O}\left(\xi^3\right)\right) \ket{n}_1\ket{0}_2
		\\\propto \xi \adhat_1 \ket{\psi} + \mathcal{O}\left(\xi^3\right) = \xi \left[\adhat_1 \ket{\psi} + \mathcal{O}\left(\xi^2\right)\right]
	\end{multline}

	
	\bibliography{apssamp}
	
\end{document}